\theoremstyle{thmstyleone}%
\newtheorem{theorem}{Theorem}
\newtheorem{corollary}[theorem]{Corollary}%
\newtheorem{lemma}[theorem]{Lemma}%
\newtheorem{construction}{Construction}
\theoremstyle{thmstyletwo}%
\newtheorem{example}{Example}%
\theoremstyle{thmstylethree}%
\newtheorem{definition}{Definition}%
\DeclarePairedDelimiter\abs{\lvert}{\rvert}
\DeclarePairedDelimiter\set{\{}{\}}
\renewcommand{\leq}{\leqslant}
\renewcommand{\geq}{\geqslant}
\newcommand{\cT}{\mathcal{T}}
\newcommand{\N}{\mathbb{N}}
\newcommand{\Z}{\mathbb{Z}}
\newcommand{\oa}{\overline{a}}
\newcommand{\ob}{\overline{b}}
\newcommand{\ov}{\overline{v}}
\newcommand{\ox}{\overline{x}}
\newcommand{\eqdef}{\triangleq}
\newcommand{\der}{\Longrightarrow}
\newcommand{\rc}{\mathrm{rc}}
\newcommand{\pal}{\mathrm{pal}}
\newcommand{\sA}{\mathsf{A}}
\newcommand{\sC}{\mathsf{C}}
\newcommand{\sG}{\mathsf{G}}
\newcommand{\sT}{\mathsf{T}}
\begin{document}

\title[On the Coding Capacity of Reverse-Complement and Palindromic Duplication-Correcting Codes]{On the Coding Capacity of Reverse-Complement and Palindromic Duplication-Correcting Codes}


\author*[1,2]{\fnm{Lev} \sur{Yohananov}}\email{levyuhananov@gmail.com}

\author[1,3]{\fnm{Moshe} \sur{Schwartz}}\email{schwartz.moshe@mcmaster.ca}

\affil[1]{\orgdiv{School of Electrical and Computer Engineering}, \orgname{Ben-Gurion University of the Negev}, \orgaddress{\city{Beer Sheva}, \postcode{8410501}, \country{Israel}}}

\affil[2]{\orgdiv{Department of Electrical and Computer Engineering, Institute for Systems Research}, \orgname{University of Maryland}, \orgaddress{\city{College Park, MD}, \postcode{20742}, \country{USA}}}

\affil[3]{\orgdiv{Department of Electrical and Computer Engineering}, \orgname{McMaster University}, \orgaddress{\city{Hamilton}, \postcode{L8S 4K1}, \state{ON}, \country{Canada}}}


\abstract{
We derive the coding capacity for duplication-correcting codes capable of correcting any number of duplications. We do so both for reverse-complement duplications, as well as palindromic (reverse) duplications. We show that except for duplication-length $1$, the coding capacity is $0$. When the duplication length is $1$, the coding capacity depends on the alphabet size, and we construct optimal codes.
}

\keywords{Error-correcting codes, string-duplication systems, reverse-complement duplication, palindromic duplication, coding capacity}


\pacs[MSC Classification]{68R15, 94B25, 94B35}

\maketitle

\section{Introduction}

Storing information in DNA molecules promises a medium that is six orders of magnitude denser than the densest electronic storage medium of today. The feasibility of this technology was first demonstrated in~\cite{ChuGaoKos12,GolBerCheDesLepSipBri13} (for in-vitro DNA storage), and then for storage in the DNA of living organisms, in~\cite{ShiNivMacChu17} (for in-vivo DNA storage). Apart from plain storage, the latter also allows watermarking of genetically-modified organisms (GMOs), labeling organisms for biological studies, and running synthetic-biology algorithms.

As with any storage or communication system, the transmitted information is corrupted by the channel and received with errors. In the case of DNA storage, some of the observed error mechanisms have already been studied in the literature in other contexts, mainly electronic communications. Among these we can see substitution errors, insertions and deletions. However, in-vivo DNA storage exhibits a new kind of errors -- duplication errors -- which may be caused during faulty DNA replication.

A DNA strand is a long string over the alphabet $\set{\sA,\sC,\sG,\sT}$, standing for the four chemical bases: adenine, cytosine, guanine, and thymine. The bases are partitioned into complement pairs, where $\sA$ and $\sT$ are complement to each other, as are $\sC$ and $\sG$. In a nutshell, a duplication error takes a substring of the DNA string, and inserts a copy of it (perhaps altered) back into the string. Possible alterations vary, depending on the biological process creating them. Common examples we shall be referring to throughout the paper, include \emph{tandem duplication}, where the copy is unaltered and placed next to its original location, e.g.,
\[
\sA\sA\sC\sT\sG\sG\sA\sT\sC\sC\sC\sT \der \sA\sA\sC\sT\sG\sG\sA\sT\underline{\sG\sG\sA\sT}\sC\sC\sC\sT,
\]
where a duplication of length $4$ occurred, inserting a duplicate copy of $\sG\sG\sA\sT$ (underlined on the right) immediately following its original location. Another type of duplication is \emph{palindromic duplication} (sometimes called \emph{reverse duplication}), which alters the inserted copy by reversing it, e.g.,
\[
\sA\sA\sC\sT\sG\sG\sA\sT\sC\sC\sC\sT \der \sA\sA\sC\sT\sG\sG\sA\sT\underline{\sT\sA\sG\sG}\sC\sC\sC\sT.
\]
Finally, we mention \emph{reverse-complement duplication}, in which the altered copy is both reversed and complemented, e.g.,
\[
\sA\sA\sC\sT\sG\sG\sA\sT\sC\sC\sC\sT \der \sA\sA\sC\sT\sG\sG\sA\sT\underline{\sA\sT\sC\sC}\sC\sC\sC\sT.
\]

The string-duplication channel was introduced in~\cite{FarSchBru16}, and the first duplication-correcting codes were developed in~\cite{JaiFarSchBru17a}. These generalize the setting to allow for arbitrary alphabets, parametric duplication length, and various duplication types. Codes correcting any number of tandem duplications were studied in~\cite{JaiFarSchBru17a,ZerEsmGul19,ZerEsmGul20,YuSch24} and any number of palindromic duplications in~\cite{ZerEsmGul20}. Codes that correct only a fixed number of tandem duplications (sometimes just one) were studied in~\cite{Kov19,LenWacYaa19,GosPolVor23}, palindromic duplications in~\cite{NguCaiSonImm22}, and reverse-complement duplications in~\cite{BenSch22}. Additionally, codes that correct mixtures of duplications, substitutions, as well as insertions and deletions, have been the focus of~\cite{TanYehSchFar20,TanFar21a,TanWanLouGabFar23}. We also mention related work on duplication errors, though not providing error-correcting codes, but rather studying the possible outcomes from multiple errors~\cite{FarSchBru16,JaiFarBru17,EliFarSchBru19,FarSchBru19,BenSch22,Eli24}.

In this paper we focus on the asymptotic rate of optimal codes (the \emph{coding capacity}) that are capable of correcting any number of duplication errors. The coding capacity for tandem-duplication errors is already known~\cite{JaiFarSchBru17a}. Thus, we study reverse-complement duplications, and palindromic duplications. We provide a complete solution, for all duplication lengths $k$, and all possible alphabet sizes $q$. We show that, except for the case of duplication length $k=1$, the coding capacity is $0$. When $k=1$, the coding capacity depends on the alphabet size $q$, and in this case we construct optimal codes. A summary of the results is provided in Table~\ref{tab:summary}.

The paper is organized as follows. In Section~\ref{sec:pre} we give all the necessary notation and definitions. In Section~\ref{sec:rcbin1} we study reverse-complement duplication with duplication length $k=1$, whereas in Section~\ref{sec:rclong} we focus on longer duplication lengths. Section~\ref{sec:pal} provides the results for palindromic duplication. We conclude in Section~\ref{sec:conc} with a short summary of the results and some open questions.

\begin{table}[t]
\caption{A summary of the coding-capacity results, where $k$ denotes the duplication length, $q$ denotes the alphabet size, and $R_q(*)_k$ denotes the coding capacity. Note that for reverse-complement duplication, $q$ must be even.}
\label{tab:summary}
\begin{tabular}{ccccl}
\toprule%
Type & $k$ & $q$ & $R_q(*)_k$ & Location  \\
\midrule
\multirow{3}{*}{Reverse Complement} &  \multirow{2}{*}{$=1$}       & $=2$      & $0$ & Corollary~\ref{cor:binary1} \\
\cmidrule{3-5}
    &           & $\geq 4$  & $\log_q(q-2)$ & Corollary~\ref{cor:nonbinary1} \\
    \cmidrule{2-5}
    & $\geq 2$  & $\geq 2$       & $0$ & Corollary~\ref{cor:long}\\
\midrule
\multirow{2}{*}{Palindromic} & $=1$          & $\geq 2$  & $\log_q(q-1)$ & \cite{JaiFarSchBru17a}, equivalent to tandem duplication \\
\cmidrule{2-5}
    & $\geq 2$  & $\geq 2$       & $0$ & Corollary~\ref{cor:pallong}\\\botrule
\end{tabular}
\end{table}

\section{Preliminaries}
\label{sec:pre}

Throughout this paper, we shall be considering strings over a finite alphabet. Without loss of generality, we shall assume that alphabet is $\Z_q$, the ring of integers modulo $q$, where $q\in\N$, $q\geq 2$. A \emph{string} is a sequence of \emph{letters} from the alphabet $\Z_q$. The set of strings of length $n\in\N$ is denoted by $\Z_q^n$. Assume $x\in\Z_q^n$ is such a string, then the letters of $x$ are denoted by adding a subscript $i$, where indices start from $0$. Thus, $x=x_0 x_1 \dots x_{n-1}$, with $x_i\in\Z_q$ for all $i$. We denote the \emph{length} of $x$ by $\abs{x}=n$. If $y\in\Z_q^m$ is also a string, we denote by $xy$ the \emph{concatenation} of $x$ and $y$. If $\ell\geq 0$ is an integer, we denote by $x^\ell$ the concatenation of $\ell$ copies of $x$. Since subscripts are used to denote individual letters in a string, and superscripts denote concatenation of copies of a string, when we require several strings with the same name, we shall use superscripts in parentheses to distinguish between them, e.g., $x^{(0)}, x^{(1)}, x^{(2)},\dots$.

The set of all finite-length strings over $\Z_q$ is denoted by $\Z_q^*$. We use $\varepsilon$ to write the unique \emph{empty string} of length $0$. Obviously $\varepsilon\in\Z_q^*$. If we are interested only in strings with positive length we write $\Z_q^+\eqdef\Z_q^*\setminus\set{\varepsilon}$.

Assume $x=uvw$, with $u,v,w\in\Z_q^*$, $\abs{v}=k$ for some $k\in\N$. Then we say that $v$ is a \emph{$k$-factor} of $x$. If $u=\varepsilon$, then $v$ is said to be a \emph{$k$-prefix} of $x$, and if $w=\varepsilon$ then $v$ is said to be a \emph{$k$-suffix} of $x$. If the length of $x$ is not of importance, we just say it is a factor, or prefix, or suffix of $x$.

The notion of \emph{complement} will be important. We assume the letters of the alphabet $\Z_q$ may be partitioned into pairs of complements. If $a\in\Z_q$ is a letter, we denote its complement by $\oa$, where necessarily $\oa\neq a$. Additionally, $\overline{\oa}=a$. It follows that if a complement is defined, then $q$ is even. We extend the complement notation in the natural way to strings, namely, if $x=x_0 x_1 \dots x_{n-1}\in\Z_q^n$, then $\ox=\overline{x_0} \,\overline{x_1} \dots \overline{x_{n-1}}$.

Another important notation we introduce is the \emph{reverse} of a string. If $x=x_0 x_1 \dots x_{n-1}\in\Z_q^n$, then $x^R \eqdef x_{n-1} \dots x_1 x_0$. Obviously, the complement and reverse operations commute, and so $\overline{x^R} = \ox^R$.

The two major concepts we shall study in this paper are string-duplication channels, and codes that correct duplication errors introduced by the channel. We shall start by describing the former. We shall generally follow the notation set by~\cite{JaiFarSchBru17a}.

In a string-duplication channel, strings from $\Z_q^*$ are transmitted. However, the channel introduces errors by applying a (finite) number of duplication rules. Generally speaking, a \emph{duplication rule} is just a function $T:\Z_q^*\to\Z_q^*$. The set of all duplication rules the channel may apply is denoted by $\cT\subseteq \Z_q^*{}^{\Z_q^*}$. The choice of rules to contain in this set is motivated by the biological processes present in our in-vivo DNA storage system. We shall focus on two such sets which we define later (for more examples, see~\cite{JaiFarSchBru17a}).

Assume $u,v\in\Z_q^*$ are two strings. We say $v$ is a \emph{descendant} of $u$ if there exist $t\geq 0$ and $T_1,T_2,\dots,T_t\in\cT$ such that
\[
v= T_t(T_{t-1}( \dots T_1(u)\dots)).
\]
If the exact sequence of rules is immaterial, we just write
\[
u\der^t v,
\]
to signify this relation between $u$ and $v$. If $t=1$ we may omit it and write $u\der v$. If $t$ is unknown or irrelevant, we simply write $u\der^*v$.  For convenience, we always have $u\der^0 u$. The set of all descendants of $u$ is called the \emph{descendant cone of $u$}, and is denoted by
\[
D^*(u) \eqdef \set*{ v\in\Z_q^* ~:~ u\der^*v }.
\]
Thus, $D^*(u)$ is the reflexive transitive closure of $\cT$ acting on $u$.

Motivated by biological processes, in this paper we focus on two sets of duplication rules: the reverse-complement and the palindromic. For the first, we define reverse-complement duplication rules of the following form:
\[
T^{\rc}_{i,k}(x)\eqdef uv \ov^R w \qquad \text{if $x=uvw$, $\abs{u}=i$, $\abs{v}=k$.}
\]
Thus, $T^{\rc}_{i,k}$ takes a $k$-factor $v$ which follows an $i$-prefix $u$, and inserts a reverse-complement copy of $v$ following its original position. Note that the rule is not defined on strings of length shorter than $i+k$. We then define the set of \emph{reverse-complement duplication rules} as
\[
\cT^{\rc}_k \eqdef \set*{T^{\rc}_{i,k} ~:~ i\geq 0}.
\]
We denote derivation by $\der_k$, and the descendant cone by $D_k$, to emphasize the length of the duplication $k$.

\begin{example}
Assume $q=4$, $k=2$, and in $\Z_q$ we have $\overline{0}=3$ and $\overline{1}=2$. In the reverse-complement duplication setting,
\[
01103203 \der_2 011032\underline{10}03,
\]
where the underlined factor is the inserted reverse-complement duplication.
\end{example}

Along the same lines, we define the palindromic-duplication rules of the following form:
\[
T^{\pal}_{i,k}(x)\eqdef uv v^R w \qquad \text{if $x=uvw$, $\abs{u}=i$, $\abs{v}=k$,}
\]
and the set of all \emph{palindromic duplication rules} as
\[
\cT^{\pal}_k \eqdef \set*{T^{\pal}_{i,k} ~:~ i\geq 0}.
\]
The only difference between the palindromic and the reverse-complement duplication rules is that the latter also complement the duplicated symbols. We once again use $\der_k$ for derivation and $D_k$ for the descendant cone. To avoid cumbersome notation we infer from the context whether we use palindromic or reverse-complement duplication rules. In particular, Section~\ref{sec:rcbin1} and Section~\ref{sec:rclong} use reverse-complement duplication rules, whereas only Section~\ref{sec:pal} uses palindromic duplication rules.

\begin{example}
Assume $q=4$ and $k=2$. In the palindromic duplication setting,
\[
01103203 \der_2 011032\underline{23}03,
\]
where the underlined factor is the inserted palindromic duplication.
\end{example}

We now reach the second major concept -- error-correcting codes. 

\begin{definition}
An $(n,M;t)$ \emph{duplication-correcting code}, with respect to a set of duplication rules $\cT$, is a set $C\subseteq\Z_q^n$, of size $\abs{C}=M$, such that for any two distinct codewords, $c,c'\in C$,
\[
D^t(c)\cap D^t(c') = \emptyset.
\]
Here, $t$ denotes the number of duplication errors the code can correct, where $t\in\N\cup\set{*}$, and where $t=*$ denotes any number of errors.
\end{definition}

An important figure of merit associated with a code's efficiency, is its \emph{rate}. If $C$ is an $(n,M;t)$ duplication-correcting code over $\Z_q$, then its rate is defined as
\[
R(C) \eqdef \frac{1}{n}\log_q M.
\]
The largest size of such a code will be denoted by $A_q(n;t)$, and codes attaining this upper bound will be called \emph{optimal}. Finally, the \emph{coding capacity} is defined as
\[
R_q(t) \eqdef \limsup_{n\to\infty} \frac{1}{n}\log_q A_q(n;t).
\]

In the remainder of the paper, we shall be focusing on the reverse-complement duplication channel and the palindromic duplication channel, both with duplication length $k$, over $\Z_q$, and with an unbounded number of errors, i.e., $t=*$. We shall therefore try to determine or bound $A_q(n;*)^{\rc}_k$ and $R_q(*)^{\rc}_k$ for the reverse-complement duplication channel, and $A_q(n;*)^{\pal}_k$ and $R_q(*)^{\pal}_k$ for the palindromic duplication channel.

\section{Reverse-Complement Duplication of Length $k=1$}
\label{sec:rcbin1}

In this section we shall completely determine $A_q(n;*)_1^{\rc}$, the maximal size of a $q$-ary code of length $n$ that is capable of correcting any number of reverse-complement duplications of length $1$. Thus, we shall also find the coding capacity $R_q(*)_1^{\rc}$. Our method is constructive, and we will present codes of optimal size. The results depend on the alphabet size, and in particular, we shall divide our discussion to the case of binary alphabet, $q=2$, and the case of larger alphabets of even size, $q\geq 4$.

\subsection{The Binary Case, $q=2$}

We determine an exact condition for two binary strings to have a common descendant. As we shall show, this depends only on the first bit of the string. We start with a simple technical lemma.

\begin{lemma}
\label{lem:firstbitderivation}
Let $a\in\Z_2$ be a bit. Then
\[
a \der^*_1 a\oa w
\]
for any $w\in\Z_2^*$.
\end{lemma}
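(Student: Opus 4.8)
The plan is to first pin down what the rule $T^{\rc}_{i,1}$ actually does: since $v$ is a single letter, reversing it does nothing, so $\ov^R=\ov$ is just the single letter $\overline{x_i}$, and $T^{\rc}_{i,1}$ merely inserts $\overline{x_i}$ immediately after position $i$. In particular $T^{\rc}_{0,1}(a)=a\oa$, so $a\der_1 a\oa$ in one step, and it remains to prove $a\oa\der^*_1 a\oa w$ for every $w\in\Z_2^*$.

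The key observation is that over the binary alphabet $\overline{a}=\oa$ and $\overline{\oa}=a$, so the two "local" rules $T^{\rc}_{0,1}$ and $T^{\rc}_{1,1}$ behave in a completely controlled way on any string beginning with $a\oa$: applying $T^{\rc}_{0,1}$ to $a\oa v$ inserts $\overline a=\oa$ right after the first letter, giving $a\oa\oa v$, while applying $T^{\rc}_{1,1}$ to $a\oa v$ inserts $\overline{\oa}=a$ right after the second letter, giving $a\oa a v$. Hence, for every $v\in\Z_2^*$ and every $b\in\Z_2$, we obtain the single-step ``prepend after the block $a\oa$'' move
\[
a\oa v\der_1 a\oa b v,
\]
using $T^{\rc}_{1,1}$ when $b=a$ and $T^{\rc}_{0,1}$ when $b=\oa$; both rules apply since $a\oa v$ has length at least $2$.

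With this move in hand, I would conclude by induction on $\abs{w}$. The base case $w=\varepsilon$ is $a\oa\der^0_1 a\oa$. For $w=bv$ with $b\in\Z_2$, the inductive hypothesis gives $a\oa\der^*_1 a\oa v$, and one application of the move yields $a\oa v\der_1 a\oa b v=a\oa w$. Composing with $a\der_1 a\oa$ proves the lemma.

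The one point I expect to be a genuine obstacle is realizing that the naive strategy of appending the letters of $w$ at the right end does not work: $T^{\rc}_{i,1}$ can append a letter at the end only when $i$ is the last index, so it can append the complement of the current last letter but never a repeated letter, and attempts to patch this introduce spurious letters. Building $w$ from the left, by prepending after the fixed prefix $a\oa$, avoids this issue entirely, because at that location both $a$ and its complement $\oa$ sit one position apart and are always available to be duplicated.
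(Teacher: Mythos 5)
Your proof is correct, but it takes a genuinely different route from the paper's. The paper decomposes the target as $a\oa w = a\,\oa^{m_1} a^{m_2} \oa^{m_3}\cdots b^{m_\ell}$ into maximal runs and builds it left to right: each new run of a letter $c$ is grown at the right end by duplicating the preceding $\oc$ the required number of times. You instead fix the two-letter prefix $a\oa$ and emit the letters of $w$ one at a time immediately after it, working from the last letter of $w$ backwards, via the uniform single-step move $a\oa v \der_1 a\oa b v$ (valid for either choice of $b\in\Z_2$ because both $a$ and $\oa$ sit in the prefix and duplicating either one deposits its complement there). Your version buys a single move plus a clean induction on $\abs{w}$, with no need to name a run decomposition; the paper's version is equally elementary but organized around runs. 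One small correction to your closing remark: appending at the right end is not actually a dead end --- duplicating the \emph{second-to-last} letter $\oc$ of a string ending in $\oc c$ yields $\oc c c$, which is exactly how the paper extends a run --- so the obstacle you describe applies only to the strategy of always duplicating the final letter, not to right-end construction in general.
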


\begin{proof}
We can write
\[
a\oa w = a \oa^{m_1} a^{m_2} \oa^{m_3} a^{m_4} \dots b^{m_\ell},
\]
where $b$ is either $a$ or $\oa$, and where $m_1\geq 1$ for all $i$. We then start with $a$, and duplicate that $a$ $m_1$ times,
\[
a \der^{m_1}_1 a \oa^{m_1}.
\]
We then duplicate the last copy of $\oa$ a total of $m_2$ times, so
\[
a \der^{m_1}_1 a \oa^{m_1} \der^{m_2}_1 a \oa^{m_1} a^{m_2}.
\]
Repeating this process of duplicating the last bit to create a new run of $a$ or $\oa$ results in the desired derivation
\[
a \der^*_1 a\oa w.
\]
\end{proof}

We can now give our main characterization of strings with a common descendant.

\begin{theorem}
\label{th:bin1cone}
Let $x,y\in \Z_2^+$. Then
\[
D_1^*(x)\cap D_1^*(y)\neq \emptyset
\]
if and only if $x_0=y_0$.
\end{theorem}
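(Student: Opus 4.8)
The "only if" direction is the easy one: I would observe that a single reverse-complement duplication $T^{\rc}_{i,1}$ never touches the first letter when $i \geq 1$, and when $i=0$ it inserts $\overline{x_0}$ in position $1$, leaving $x_0$ in position $0$. So the first letter is invariant under every rule in $\cT^{\rc}_1$, hence under all of $\der^*_1$. Thus any common descendant $z$ of $x$ and $y$ satisfies $z_0 = x_0$ and $z_0 = y_0$, forcing $x_0 = y_0$.

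For the "if" direction, assume $x_0 = y_0 =: a$. The plan is to exhibit a single string that lies in both descendant cones. The natural candidate, given Lemma~\ref{lem:firstbitderivation}, is a word of the form $a\oa w$ that is simultaneously a descendant of $x$ and of $y$. First I would show that $x \der^*_1 x_0 \overline{x_0} x_1 \overline{x_1} \cdots x_{n-1}\overline{x_{n-1}}$, or more simply that $x$ can reach some word beginning with $a \oa$ and containing, as a scattered subsequence in order, all the "material" of $x$: concretely, apply $T^{\rc}_{0,1}$ once to get $x_0 \overline{x_0} x_1 \cdots x_{n-1}$, which already begins with $a\oa$. Now the key point: once a string begins with $a \oa$, Lemma~\ref{lem:firstbitderivation} tells us that $a \der^*_1 a\oa w$ for \emph{every} $w \in \Z_2^*$ — but I need the stronger statement that from a string already of the form $a\oa v$ I can still reach $a\oa w$ for any $w$, i.e. that the tail $v$ doesn't obstruct anything. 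This follows because the derivation in Lemma~\ref{lem:firstbitderivation} only ever duplicates the current last bit to grow a new run at the end, so I can first run that construction on the prefix $a\oa$ ignoring $v$ — except $v$ is in the way. The cleaner route is: from $a\oa v$, note $a\oa \der^*_1 a\oa w$ inside the full string by operating entirely on the $a\oa$ prefix, pushing the rest of the construction to happen before $v$; since the rules $T^{\rc}_{i,1}$ act at arbitrary positions $i$, I can apply exactly the duplications of Lemma~\ref{lem:firstbitderivation} with all indices $\leq 1$, building up $a\oa w$ to the left of $v$, yielding $a \oa w v$. Hmm — that leaves the trailing $v$. So instead I would argue directly: $x \der^*_1 x' $ where $x'$ begins $a \oa$, and symmetrically $y \der^*_1 y'$ beginning $a\oa$; then I want a common descendant of $x'$ and $y'$.

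The slick finish avoids the tail problem entirely. Take $z = a\oa^{N}$ for $N$ large, or better, any fixed target: I claim $x \der^*_1 a \oa$ is false in general (lengths only grow), so instead I aim for a common descendant that both can reach by \emph{padding}. Concretely: let $z_x$ be any descendant of $x$ and $z_y$ any descendant of $y$; I want to choose them equal. Using $T^{\rc}_{0,1}$ on $x$ gives a string starting $a\oa$, then by the run-extension technique of Lemma~\ref{lem:firstbitderivation} applied with indices confined to the prefix, $x \der^*_1 a \oa \sigma$ where $\sigma$ can be made to be any string whose first two letters I've fixed — in particular I can make the derivation realize $a\oa w$ followed by the unavoidable old suffix. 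The honest fix: prove the refined claim $a \oa v \der^*_1 a\oa w v$ for all $w$ (just Lemma~\ref{lem:firstbitderivation}'s construction performed at positions $\geq 0$, inserting new runs immediately after position $1$, before $v$), and then $a\oa v \der^*_1 a \oa w v$ and $a \oa v' \der^*_1 a\oa w v'$; now pick $w := \oa v' $ for the first and $w := v$ appropriately... this still doesn't match. The real resolution, which I expect to be the one intended: show $x \der^*_1 s$ and $y \der^*_1 s$ for $s = a\oa^{\abs{x}+\abs{y}}$ or similar, by noting that after the first duplication both strings start $a \oa$, and then $a\oa v \der^*_1 a \oa^M$ whenever the construction can "overwrite" — which it cannot, $v$ persists.

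So the main obstacle, and where I'd focus the actual write-up, is exactly this: from a string of the form $a\oa v$, reach a string of the form $a \oa w$ for a \emph{prescribed} $w$ with the old tail $v$ absorbed or harmless. I believe the correct statement is that $a \oa v \der^*_1 a\oa w$ is achievable whenever $w$ itself \emph{contains} (or can be made to contain) enough structure — and in fact the cleanest claim is symmetric and target-free: \textbf{both} $D^*_1(x)$ and $D^*_1(y)$ contain \emph{every} string of the form $a\oa u$ that is long enough, because Lemma~\ref{lem:firstbitderivation} combined with the ability to first "flatten" any string $a\oa v$ down — no, flattening is impossible. Given the constraints, I expect the author instead proves: $x \der^*_1 x_0\overline{x_0} w$ for some specific $w$ depending on $x$, then separately $y \der^*_1 y_0 \overline{y_0} w'$, and then shows $x_0\overline{x_0} w$ and $y_0 \overline{y_0} w'$ have a common descendant via one more application of Lemma~\ref{lem:firstbitderivation}-type reasoning, closing the loop by the observation that $a\oa v \der^*_1 a \oa v'$ is possible for suitable $v'$ — I would plan to follow the lemma's "grow a new run at the end" idea applied to the longer of the two, to force agreement. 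The length bookkeeping is routine; the conceptual core is the first-letter invariance (trivial) plus Lemma~\ref{lem:firstbitderivation} (given), so the theorem should follow once the tail-matching is set up correctly.
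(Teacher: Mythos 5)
Your ``only if'' direction is correct and matches the paper's: the first letter is invariant under every rule in $\cT^{\rc}_1$, so strings with distinct first letters have disjoint cones. The ``if'' direction, however, has a genuine gap. You correctly isolate the obstacle---from $a\oa v$ you can insert an arbitrary $w$ \emph{before} the old tail $v$ (reaching $a\oa w v$), but you cannot erase or overwrite $v$, so aiming for a common descendant of the form $a\oa w$ with both tails ``absorbed'' cannot work---and you never resolve it. The proposal ends with several abandoned attempts and the admission that the tail-matching ``should follow once \dots set up correctly,'' which is precisely the part that is missing.

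The paper's resolution is an asymmetric interleaving rather than a symmetric absorption. Write $x=ax'$ and $y=ay''b$ (the case $y=a$ is handled separately and is easy). The common descendant is $a\oa y''b\ob x'$: the string $x$ reaches it by inserting $\oa y''b$ immediately after its first letter (the run-growing construction of Lemma~\ref{lem:firstbitderivation} confined to the prefix, leaving $x'$ untouched at the end) and then duplicating that $b$ once to place $\ob$ just before $x'$; the string $y$ reaches it by duplicating its first letter to get $a\oa y''b$ and then \emph{appending} $\ob x'$ after its last letter $b$ (the same run-growing construction applied at the end, which naturally begins with $\ob$ since the first duplication of $b$ produces $b\ob$). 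In other words, one string's tail is parked in the middle of the target and the other's at the end, and each string only needs to generate the other's tail on the side where its own tail is not in the way. You had both ingredients (insert-before-tail and append-after-end) but never combined them in this way, so as written the ``if'' direction is not established.
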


\begin{proof}
For the first direction, assume $x_0\neq y_0$. Since no duplication can change the first bit of the string, it follows that all the strings in $D_1^*(x)$ start with $x_0$, whereas all those in $D_1^*(y)$ start with $y_0$. Hence, $D_1^*(x)\cap D_1^*(y)=\emptyset$.

For the second direction, assume $x_0=y_0=a$. Let us therefore write
\begin{align*}
x &= a x', & y &= a y'.
\end{align*}
We distinguish between two cases. For the first case, assume $y'=\varepsilon$, namely, $y=a$. In that case we have,
\[
x=a x' \der_1 a \oa x',
\]
and by Lemma~\ref{lem:firstbitderivation} we also have
\[
y = a \der_1^* a \oa x'.
\]
Hence, $D_1^*(x)\cap D_1^*(y)\neq\emptyset$.

For the second case, assume $y'\neq\varepsilon$, and write $y'=y''b$, with $b\in\Z_2$. We now observe the derivations
\begin{align*}
x&=ax' \der_1^* a \oa y'' b x' \der_1 a \oa y'' b \ob x', \\
y&=ay''b \der_1 a \oa y'' b \der_1^* a \oa y'' b \ob x',
\end{align*}
where both of the $\der_1^*$ derivations follow from Lemma~\ref{lem:firstbitderivation}.
\end{proof}

We thus immediately deduce the following:

\begin{corollary}
\label{cor:binary1}
For all $n\in\N$
\[
A_2(n;*)_1^{\rc} = 2,
\]
and therefore,
\[
R_2(*)_1^{\rc} = 0.
\]
\end{corollary}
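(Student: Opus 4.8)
The plan is to deduce the corollary directly from Theorem~\ref{th:bin1cone}, since the theorem gives an exact combinatorial characterization of when two binary strings share a common descendant: precisely when they agree in their first bit. This immediately partitions $\Z_2^n$ (for $n\geq 1$) into two classes according to the value of the first bit, and a code is duplication-correcting if and only if no two of its codewords lie in the same class.

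First I would observe that the two singleton-determined classes $\set{x\in\Z_2^n : x_0=0}$ and $\set{x\in\Z_2^n : x_0=1}$ each consist of pairwise-confusable strings: by Theorem~\ref{th:bin1cone}, any two strings with the same leading bit have $D_1^*(x)\cap D_1^*(y)\neq\emptyset$, so at most one codeword can be chosen from each class. Hence $A_2(n;*)_1^{\rc}\leq 2$. For the matching lower bound, I would exhibit an explicit code of size $2$, for instance $C=\set{0^n,1^n}$: its two codewords differ in the first bit, so by the other direction of Theorem~\ref{th:bin1cone} (no duplication alters the leading bit, hence the descendant cones are disjoint), $C$ is a valid $(n,2;*)$ reverse-complement duplication-correcting code. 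This establishes $A_2(n;*)_1^{\rc}=2$ for all $n\geq 1$ (the case $n=0$ being trivial, or simply excluded since codewords are nonempty here).

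For the capacity statement, I would just substitute into the definition: $R_2(*)_1^{\rc}=\limsup_{n\to\infty}\frac1n\log_2 A_2(n;*)_1^{\rc}=\limsup_{n\to\infty}\frac1n\log_2 2=\limsup_{n\to\infty}\frac1n=0$.

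There is no real obstacle here — the content is entirely in Theorem~\ref{th:bin1cone}, and the corollary is a one-line consequence. The only point requiring a sentence of care is making sure both inequalities $A_2(n;*)_1^{\rc}\leq 2$ and $A_2(n;*)_1^{\rc}\geq 2$ are drawn from the two directions of the theorem, rather than conflating them.
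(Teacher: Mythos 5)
Your proposal is correct and follows essentially the same route as the paper: an upper bound of $2$ from Theorem~\ref{th:bin1cone} (at most one codeword per leading bit), the explicit code $\set{0^n,1^n}$ for the lower bound, and the capacity computed directly from the definition. No gaps.
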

\begin{proof}
Since in a binary $(n,M;*)_1^{\rc}$ code we need to have pair-wise disjoint descendant cones for distinct codewords, by Theorem~\ref{th:bin1cone}, we can have at most two codewords: one starting with a $0$, and one starting with a $1$. This is also easily achievable, e.g., the repetition code $C=\set{0^n,1^n}$. Hence, the maximal size of a binary code in this case is $A_2(n;*)_1^{\rc}=2$. The coding capacity follows by definition.
\end{proof}

\subsection{The Non-binary Case, $q\geq 4$}

In this section we assume a general non-binary alphabet, namely, $\Z_q$, with $q\geq 4$ even. We shall, once again, characterize the exact condition for two strings to have a joint descendant, thereby allowing us to find optimal codes as well as the coding capacity.

We introduce some new notation. Let $a\in\Z_q$ be a letter. We define
\[
a^\oplus \eqdef a\set*{a,\oa}^*,
\]
namely, $a^\oplus$ denotes the set of all strings starting with the letter $a$, and followed by any finite length string composed of $a$ and $\oa$ only.

If $x\in\Z_q^+$ is a non-empty string, we can always decompose it into
\begin{equation}
\label{eq:decomp}
x \in a_0^\oplus a_1^\oplus \dots a_{\ell-1}^\oplus,
\end{equation}
where $a_i\in\Z_q$ for all $i$. We say this is a minimal decomposition if $\ell$ is the smallest possible, thus necessarily, $a_{i+1}\not\in\set{a_i,\overline{a_i}}$ for all $i$. If~\eqref{eq:decomp} is a minimal decomposition of $x$, we define the \emph{signature of $x$} to be
\[
\sigma(x) \eqdef a_0 a_1 \dots a_{\ell-1}.
\]

\begin{example}
Assume $q=4$, $k=2$, and in $\Z_q$ we have $\overline{0}=3$ and $\overline{1}=2$. Then
\[
\sigma(0110300203) = 01020.
\]
This is because the minimal decomposition of $0110300203$ is given by $0|11|0300|2|03$, where the vertical lines show the relevant factors in the decomposition. Notice how
\[
0\in 0^\oplus, \quad 11\in 1^\oplus, \qquad 0300\in 0^\oplus, \qquad 2\in 2^\oplus, \qquad 03\in 0^\oplus.
\]
\end{example}

We first observe that all strings in a descendant cone share the same signature.

\begin{lemma}
\label{lem:signature}
Let $x\in\Z_q^+$ be a string. Then for all $x'\in D^*_1(x)$,
\[
\sigma(x)=\sigma(x').
\]
\end{lemma}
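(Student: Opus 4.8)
The plan is to show that a single reverse-complement duplication of length $1$ does not change the signature, and then extend to arbitrary derivations by induction on the number of steps. Since every $x' \in D_1^*(x)$ is obtained from $x$ by finitely many single-step duplications, it suffices to prove that $\sigma(T^{\rc}_{i,1}(x)) = \sigma(x)$ for every $i$ for which the rule is defined.

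First I would unravel what a length-$1$ reverse-complement duplication does: if $x = uaw$ with $\abs{u} = i$ and $a \in \Z_q$ the single duplicated letter, then $T^{\rc}_{i,1}(x) = u a \overline{\overline{a}^R} w$, but since $\overline{a}$ is a single letter $\overline{a}^R = \overline{a}$ and $\overline{\overline{a}} = a$, so in fact the inserted letter is $a$ itself. Thus a length-$1$ reverse-complement duplication is simply $x = uaw \mapsto uaaw$; it duplicates a single letter in place. (This collapse is worth stating explicitly, as it also explains why this case behaves like tandem duplication.) So the claim reduces to: duplicating one letter of $x$ in place does not change $\sigma(x)$.

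Next I would argue this at the level of the minimal decomposition. Write the minimal decomposition of $x$ as $x \in a_0^\oplus a_1^\oplus \cdots a_{\ell-1}^\oplus$, and suppose the duplicated letter $a$ falls inside the block belonging to $a_j^\oplus$; call that block $v_j$, so $v_j = a_j z$ for some $z \in \set{a_j, \overline{a_j}}^*$ and $a \in \set{a_j, \overline{a_j}}$. Duplicating $a$ in place replaces $v_j$ by a word $v_j'$ obtained by inserting one extra copy of $a$; since $a \in \set{a_j, \overline{a_j}}$, the new word still starts with $a_j$ and still lies in $a_j^\oplus$. Hence $x' \in a_0^\oplus \cdots a_j^\oplus \cdots a_{\ell-1}^\oplus$ with exactly the same sequence $a_0, \dots, a_{\ell-1}$. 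The one subtlety is minimality: I must check that this decomposition of $x'$ is still minimal, i.e., that no two adjacent $a_m$'s have become mergeable. But the sequence $a_0, \dots, a_{\ell-1}$ is unchanged, so the condition $a_{m+1} \notin \set{a_m, \overline{a_m}}$ for all $m$ still holds, and minimality of the decomposition of $x'$ follows. Therefore $\sigma(x') = a_0 \cdots a_{\ell-1} = \sigma(x)$.

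Finally, the induction: the base case $t = 0$ is trivial since $x' = x$, and for the inductive step, if $x \der_1^{t} x'$ via $x \der_1^{t-1} x'' \der_1 x'$, then $\sigma(x) = \sigma(x'')$ by the induction hypothesis and $\sigma(x'') = \sigma(x')$ by the single-step argument above, giving $\sigma(x) = \sigma(x')$. The only real care-point in the whole argument is the verification that duplicating a letter inside an $a_j$-block keeps the block inside $a_j^\oplus$ and preserves minimality — everything else is bookkeeping — so I would present that step with the decomposition written out explicitly, but I do not expect it to be hard.
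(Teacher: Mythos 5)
There is a genuine error at the start of your single-step analysis. The rule is $T^{\rc}_{i,k}(x) = uv\ov^R w$ for $x = uvw$, so for $k=1$ and $v = a$ the inserted letter is $\ov^R = \oa^R = \oa$, the \emph{complement} of $a$ --- not $a$ itself. You wrote $T^{\rc}_{i,1}(x) = ua\overline{\oa^R}w$, which has a spurious second complement bar; the correct image is $ua\oa w$. Consequently your claim that length-$1$ reverse-complement duplication ``collapses'' to tandem duplication is false, and this is not a harmless slip of notation: the whole point of Section~\ref{sec:rcbin1} is that these two channels differ, which is why the paper obtains coding capacity $\log_q(q-2)$ for reverse-complement duplication at $k=1$ (Corollary~\ref{cor:nonbinary1}) versus $\log_q(q-1)$ for palindromic/tandem duplication. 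Indeed the definition $a^\oplus = a\set{a,\oa}^*$ and the derivation $a \der_1 a\oa$ underlying Lemma~\ref{lem:firstbitderivation} only make sense because the duplicated copy is complemented.

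That said, the error is repairable and the rest of your argument survives the repair: if the duplicated letter $a$ lies in the block $v_j \in a_j^\oplus$, then $a \in \set{a_j,\overline{a_j}}$, hence also $\oa \in \set{a_j,\overline{a_j}}$, so inserting $\oa$ immediately after $a$ keeps the block inside $a_j^\oplus$ and leaves its leading letter $a_j$ unchanged; your minimality check (the sequence $a_0,\dots,a_{\ell-1}$ is untouched, so $a_{m+1}\notin\set{a_m,\overline{a_m}}$ still holds) and the induction on the number of derivation steps are fine as written. Once corrected, your proof is essentially the paper's: the paper likewise observes that a single length-$1$ duplication merely extends the block $x^{(i)}$ containing the duplicated letter without changing its first letter, and then extends to $D_1^*(x)$ by induction.
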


\begin{proof}
Let $x\in\Z_q^+$ be any string, and let its signature be $\sigma(x)=a_0 a_1 \dots a_{\ell-1}$, with $a_i\in\Z_q$ for all $i$. Furthermore, assume its minimal decomposition is
\[
x = x^{(0)} x^{(1)} \dots x^{(\ell-1)},
\]
where $x^{(i)}\in a_i^\oplus$ for all $i$.

Now let $x''$ be obtained from $x$ using a single reverse-complement duplication of length $1$, namely, $x\der_1 x''$. If the letter being duplicated resides in $x^{(i)}$, then the effect of the duplication is merely extending $x^{(i)}$ by one letter, without changing its first one. Thus,
\[
\sigma(x'')=\sigma(x).
\]
By a simple induction this may be extended to any number of derivations, and so for any $x'\in D_1^*(x)$ we have $\sigma(x)=\sigma(x')$.
\end{proof}

The signature can also determine whether two strings have a common descendant, as the following theorem shows.

\begin{theorem}
\label{th:nbin1cone}
Let $x,y\in\Z_q^+$. Then
\[
D_1^*(x)\cap D_1^*(y)\neq \emptyset
\]
if and only if $\sigma(x)=\sigma(y)$.
\end{theorem}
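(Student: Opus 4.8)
The plan is to mirror the structure of the binary case (Theorem~\ref{th:bin1cone}), but with the signature $\sigma$ playing the role that the first bit $x_0$ played there. The forward direction is immediate from Lemma~\ref{lem:signature}: if $D_1^*(x)\cap D_1^*(y)\neq\emptyset$, pick a common descendant $z$; then $\sigma(x)=\sigma(z)=\sigma(y)$. So the entire content is in the reverse direction: assuming $\sigma(x)=\sigma(y)=a_0 a_1\dots a_{\ell-1}$, I must exhibit a string lying in both $D_1^*(x)$ and $D_1^*(y)$.

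The key technical ingredient will be a non-binary analogue of Lemma~\ref{lem:firstbitderivation}. Specifically, I would first establish a claim of the form: for any letter $a\in\Z_q$ and any string $w\in\set{a,\ov a}^*$, we have $a\der_1^* a\ov a w$ (or more generally, that starting from any $x^{(i)}\in a_i^\oplus$ one can derive any longer string in $a_i^\oplus$ that has $a^{(i)}$ as a prefix — whatever exact form is most convenient). The proof is essentially the same run-building argument as in Lemma~\ref{lem:firstbitderivation}: a length-$1$ reverse-complement duplication applied to the last letter of a run turns $\dots c$ into $\dots c\ov c$, starting a new run, and duplicating the last letter of a run extends that run; since $w$ is a concatenation of runs of $a$ and $\ov a$, one builds it up run by run. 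The only subtlety versus the binary case is that here the alphabet is larger, but since $w$ only involves the two letters $a,\ov a$, the argument is unchanged — the ``irrelevant'' letters of $\Z_q$ never enter.

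Granting that, the reverse direction proceeds block by block along the minimal decomposition. Write the minimal decompositions $x=x^{(0)}x^{(1)}\cdots x^{(\ell-1)}$ and $y=y^{(0)}y^{(1)}\cdots y^{(\ell-1)}$ with $x^{(i)},y^{(i)}\in a_i^\oplus$. Since each duplication acts within a single block and only lengthens it (this is exactly the observation from the proof of Lemma~\ref{lem:signature}), derivations on different blocks are independent. The goal is to choose, for each $i$, a target string $z^{(i)}\in a_i^\oplus$ with $x^{(i)}\der_1^* z^{(i)}$ and $y^{(i)}\der_1^* z^{(i)}$; then $z=z^{(0)}\cdots z^{(\ell-1)}$ is the desired common descendant. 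For a single block: both $x^{(i)}$ and $y^{(i)}$ begin with $a_i$, and using the analogue of Lemma~\ref{lem:firstbitderivation} one shows each can derive a common string — e.g., mimic the two-case analysis of Theorem~\ref{th:bin1cone} (handle $x^{(i)}=a_i$ separately, then the general case $x^{(i)}=a_i u'$, $y^{(i)}=a_i v' b$), deriving $a_i\ov{a_i}\,v''\,b\,\ov b\,u'$-type strings from both sides. One must check the boundary between blocks is respected: since each $z^{(i)}\in a_i^\oplus$ and the minimality of the decomposition (namely $a_{i+1}\notin\set{a_i,\ov{a_i}}$) is a property of the $a_i$'s alone, the concatenation $z$ indeed has the prescribed minimal decomposition and no blocks merge.

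The main obstacle I anticipate is purely bookkeeping: making the per-block common-descendant construction uniform so that the concatenation is clean, and being careful that the first letter of each block is never altered (so blocks don't bleed into each other and the signature is preserved throughout the derivation, not just at the endpoint). None of this is conceptually hard given Lemma~\ref{lem:signature} and the binary template; the risk is only in the indices. I would therefore state the block-level statement as a small lemma (``for $u,v\in a^\oplus$ there is $z\in a^\oplus$ with $u\der_1^* z$ and $v\der_1^* z$''), prove it by the Theorem~\ref{th:bin1cone} argument localized to $\set{a,\ov a}$, and then assemble the blocks.
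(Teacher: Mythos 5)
Your proposal is correct and follows essentially the same route as the paper: the forward direction via Lemma~\ref{lem:signature}, and the reverse direction block by block along the minimal decomposition, producing a common descendant $z^{(i)}$ for each pair $x^{(i)},y^{(i)}\in a_i^\oplus$ and concatenating. The only cosmetic difference is that the paper obtains the block-level common descendant by directly invoking Theorem~\ref{th:bin1cone} (each block lives over the two-letter sub-alphabet $\set{a_i,\overline{a_i}}$), whereas you propose to re-prove that statement as a separate localized lemma.
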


\begin{proof}
For the first direction, assume $\sigma(x)=\sigma(y)=a_0 a_1 \dots a_{\ell-1}$. That means we can write
\begin{align*}
x &= x^{(0)} x^{(1)} \dots x^{(\ell-1)},\\
y &= y^{(0)} y^{(1)} \dots y^{(\ell-1)},
\end{align*}
where $x^{(i)},y^{(i)}\in a_i^\oplus$, for all $i$. We also note that $x^{(i)}$ and $y^{(i)}$ begin with the same letter, $a_i$, and contain only letters from $\set{a_i,\overline{a_i}}$. Hence, by Theorem~\ref{th:bin1cone}, for each $i$, there exists $z^{(i)}\in\Z_q^+$ such that
\[
z^{(i)}\in D_1^* (x^{(i)})\cap D_1^* (y^{(i)}).
\]
Namely,
\[
x^{(i)} \der_1^* z^{(i)}, \qquad\text{and}\qquad
y^{(i)} \der_1^* z^{(i)}.
\]
It then follows that
\begin{align*}
x&= x^{(0)} x^{(1)} \dots x^{(\ell-1)} \der_1^* z^{(0)} z^{(1)} \dots z^{(\ell-1)},\\
y&= y^{(0)} y^{(1)} \dots y^{(\ell-1)} \der_1^* z^{(0)} z^{(1)} \dots z^{(\ell-1)},\\
\end{align*}
and so
\[
D_1^*(x)\cap D_1^*(y)\neq \emptyset.
\]

For the other direction, assume $D_1^*(x)\cap D_1^*(y)\neq \emptyset$, and let $z\in D_1^*(x)\cap D_1^*(y)$ be a common descendant of $x$ and $y$. By Lemma~\ref{lem:signature},
\[
\sigma(x) = \sigma(z) = \sigma(y),
\]
which completes the proof.
\end{proof}

We note that Theorem~\ref{th:nbin1cone} is a generalization of Theorem~\ref{th:bin1cone}, since in the binary case, the signature of any non-empty string is just its first letter.

By knowing when descendant cones intersect, we can now construct $q$-ary $(n,M;*)^{\rc}_1$ codes.

\begin{construction}
\label{con:qary}
Let $n\in\N$. We construct the following code:
\[
C = \set*{a_0 a_1\dots a_{\ell-1}a_{\ell-1}^{n-\ell} ~:~ 1\leq \ell\leq n, \forall i, a_i\in\Z_q, a_{i+1}\notin\set*{a_i,\overline{a_i}}}.
\]
\end{construction}

\begin{theorem}
\label{th:qaryopt}
The code $C$ from Construction~\ref{con:qary} is an optimal $(n, M; *)^{\rc}_{1}$ code, with
\[
M=q\cdot \frac{(q-2)^n-1}{q-3}.
\]
\end{theorem}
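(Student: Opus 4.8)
The plan is to establish optimality in two parts: first, that $C$ is a valid $(n,M;*)^{\rc}_1$ code; second, that its size matches $A_q(n;*)^{\rc}_1$, which requires counting signature classes.

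For validity, I would invoke Theorem~\ref{th:nbin1cone}: a set $C\subseteq\Z_q^n$ is an $(n,M;*)^{\rc}_1$ code if and only if all codewords have pairwise distinct signatures. So I need to check that two distinct codewords of $C$ have distinct signatures. A codeword $c=a_0a_1\dots a_{\ell-1}a_{\ell-1}^{n-\ell}$ has, by the constraint $a_{i+1}\notin\set{a_i,\overline{a_i}}$ (in particular $a_{i+1}\neq a_i$), the property that its minimal decomposition is exactly $a_0\mid a_1\mid\dots\mid a_{\ell-2}\mid a_{\ell-1}a_{\ell-1}^{n-\ell}$, so $\sigma(c)=a_0a_1\dots a_{\ell-1}$. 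Hence the map sending a codeword to its defining tuple $(a_0,\dots,a_{\ell-1})$ is precisely the signature map, and it is visibly injective on $C$: distinct tuples give distinct signatures, hence distinct codewords with distinct signatures.

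For optimality, I would argue that every signature arising from a length-$n$ string over $\Z_q$ is realized by exactly one codeword of $C$. Given any $z\in\Z_q^n$, its signature $\sigma(z)=a_0\dots a_{\ell-1}$ satisfies $1\le \ell\le n$ (since $|z|=n$ forces at most $n$ factors, and at least one) and $a_{i+1}\notin\set{a_i,\overline{a_i}}$ by minimality; therefore the codeword $a_0\dots a_{\ell-1}a_{\ell-1}^{n-\ell}\in C$ has signature $\sigma(z)$. Thus the number of distinct signatures of length-$n$ strings equals $|C|$, so by Theorem~\ref{th:nbin1cone} any code has size at most $|C|$, giving $A_q(n;*)^{\rc}_1=|C|=M$.

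It remains to compute $M$ by counting the valid tuples. For each fixed $\ell$, $a_0$ is free ($q$ choices) and each subsequent $a_{i+1}$ avoids the two values $a_i$ and $\overline{a_i}$ (which are distinct since $q$ is even and $\overline{a}\neq a$), giving $q-2$ choices; so there are $q(q-2)^{\ell-1}$ tuples of length $\ell$. Summing the geometric series,
\[
M=\sum_{\ell=1}^{n} q(q-2)^{\ell-1} = q\cdot\frac{(q-2)^n-1}{(q-2)-1} = q\cdot\frac{(q-2)^n-1}{q-3},
\]
as claimed (valid for $q\ge 4$, where $q-3\ge 1$). The only real subtlety is confirming that the minimal decomposition of a codeword $a_0\dots a_{\ell-1}a_{\ell-1}^{n-\ell}$ has exactly $\ell$ parts — one must check both that it cannot be shortened (guaranteed by $a_{i+1}\notin\set{a_i,\overline{a_i}}$, so no adjacent parts merge) and that the trailing block $a_{\ell-1}^{n-\ell}$ together with the last symbol $a_{\ell-1}$ forms a single part in $a_{\ell-1}^\oplus$ — which is immediate. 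Everything else is routine bookkeeping and a geometric sum.
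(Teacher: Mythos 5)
Your proposal is correct and follows essentially the same route as the paper's own proof: identify each codeword's signature with its defining tuple, invoke Theorem~\ref{th:nbin1cone} for both validity and the matching upper bound, and evaluate the geometric sum. The extra care you take in verifying that the minimal decomposition has exactly $\ell$ parts and that every signature of a length-$n$ string is realized in $C$ is detail the paper leaves implicit, but it is the same argument.
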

\begin{proof}
All the codewords of $C$ are of length $n$. Consider a codeword $c\in C$, with
\[
c=a_0 a_1\dots a_{\ell-1}a_{\ell-1}^{n-\ell},
\]
where for all $i$ we have $a_i\in\Z_q$ and $a_{i+1}\not\in\set{a_i,\overline{a_i}}$. By definition, the signature of $c$ is 
\[
\sigma(c) = a_0 a_1 \dots a_{\ell-1}.
\]
It follows that distinct codewords of $C$ have distinct signatures, and so by Theorem~\ref{th:nbin1cone}, they have disjoint descendant cones. Additionally, each possible signature appears in $C$, and so it is optimal. It remains to compute its size, $M$. Assuming we want a signature of length $\ell$, we have $q$ options for the first letters, and each subsequent letter has $q-2$ options. Thus,
\[
M=\sum_{\ell=1}^{n} q(q-2)^{\ell-1}=q\cdot \frac{(q-2)^n-1}{q-3}.
\]
\end{proof}

We comment that decoding the code from Construction~\ref{con:qary} is straightforward. Assume $z\in\Z_q$ was received. We then compute its signature, which we denote by $\sigma(z)=a_0 a_1 \dots a_{\ell-1}$, with $a_i\in\Z_q$ for all $i$. Then the decoding function returns $a_0 a_1 \dots a_{\ell-1} a_{\ell-1}^{n-\ell}\in C$. The entire process runs in $O(\abs{z})$ time, i.e., linear in the length of the received string.

Finally, we obtain the following:

\begin{corollary}
\label{cor:nonbinary1}
For all $n\in\N$ and even $q\in\N$, $q\geq 4$,
\[
A_q(n;*)^{\rc}_1 = q\cdot \frac{(q-2)^n-1}{q-3},
\]
and therefore
\[
R_q(*)^{\rc}_1 = \log_q (q-2).
\]
\end{corollary}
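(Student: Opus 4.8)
The plan is to read off the first equality directly from Theorem~\ref{th:qaryopt} and then perform a routine asymptotic estimate for the capacity. Recall that the paper defines $A_q(n;t)$ as the largest size of an $(n,M;t)$ duplication-correcting code, and calls a code \emph{optimal} precisely when its size attains this bound. Theorem~\ref{th:qaryopt} states that the code $C$ of Construction~\ref{con:qary} is an \emph{optimal} $(n,M;*)^{\rc}_1$ code with $M = q\cdot\frac{(q-2)^n-1}{q-3}$. Hence, by the very definition of optimality, $A_q(n;*)^{\rc}_1 = q\cdot\frac{(q-2)^n-1}{q-3}$, which is the first displayed identity of the corollary. (Note that $q\geq 4$ even forces $q-3\geq 1$, so the expression is well-defined and positive for every $n\geq 1$.)

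For the coding capacity, I would substitute this exact count into the definition $R_q(*)^{\rc}_1 = \limsup_{n\to\infty}\frac1n\log_q A_q(n;*)^{\rc}_1$ and estimate the logarithm. Using $q-2\geq 2$, one has $\tfrac12(q-2)^n \leq (q-2)^n - 1 \leq (q-2)^n$ for all $n\geq 1$ (the left inequality because $(q-2)^n\geq 2^n\geq 2$), whence
\[
\frac{q}{2(q-3)}\,(q-2)^n \;\leq\; A_q(n;*)^{\rc}_1 \;\leq\; \frac{q}{q-3}\,(q-2)^n .
\]
Applying $\tfrac1n\log_q(\cdot)$ to all three terms, the two constant prefactors contribute a quantity of order $O(1/n)$, which vanishes as $n\to\infty$, and both the lower and upper bounds converge to $\log_q(q-2)$. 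Therefore the $\limsup$ is in fact an ordinary limit, equal to $\log_q(q-2)$, as claimed. Equivalently, one may avoid the sandwiching entirely by writing $\log_q\!\bigl(q\cdot\tfrac{(q-2)^n-1}{q-3}\bigr) = n\log_q(q-2) + \log_q\!\bigl(\tfrac{q}{q-3}\bigr) + \log_q\!\bigl(1-(q-2)^{-n}\bigr)$ and dividing by $n$.

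There is no genuine obstacle here: all the mathematical substance is already contained in Theorem~\ref{th:qaryopt} (which in turn rests on the signature characterization of Theorem~\ref{th:nbin1cone}). What remains is only the bookkeeping observation that an exact formula for the size of a provably optimal code determines $A_q(n;*)^{\rc}_1$ outright, together with the elementary fact that multiplicative constants and lower-order terms are invisible to $\frac1n\log_q(\cdot)$ in the limit.
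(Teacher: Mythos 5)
Your proposal is correct and follows essentially the same route as the paper, which likewise derives both claims directly from the optimality and size formula in Theorem~\ref{th:qaryopt}. The only difference is that you spell out the elementary limit computation that the paper leaves implicit; nothing is missing.
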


\begin{proof}
The claims follow directly from the size of the optimal code proved in Theorem~\ref{th:qaryopt}.
\end{proof}

\section{Reverse-Complement Duplication of Length $k\geq 2$}
\label{sec:rclong}

We shall now study codes that correct any number of reverse-complement duplication errors of length $k\geq 2$. We will show that the number of codewords in such a code is upper bounded by a constant, therefore implying a vanishing coding capacity.

Our proof strategy consists of identifying a property of strings that ensures the existence of a common descendant. We call this the \emph{$k$-summary} of the string. Since only a constant number of summaries are possible, the main claim will follow.

Let $w\in\Z_q^*$ be a string. We say that $a\in\Z_q$ is the \emph{$i$-th letter from the end} in $w$ if we can write
\[
w = w' a w''
\]
with $w',w''\in\Z_q^*$ and $\abs{w''}=i$. In particular, the $0$-th letter from the end of $w$ is the last letter of $w$. We also require the following technical tool from~\cite{BenSch22}:

\begin{lemma}[{{\cite[Lemma 2]{BenSch22}}}]
\label{lem:push}
Let $k\geq2$, $x \in \Z_q^*, |x|\geq k + 1$, and let $i \geq 2$ be an integer. If the $i$-th letter from the end of $x$ is $a$, then there exists $x'\in D^2_k(x)$, such that $a$ is the $(i-2)$-nd letter from the end of $x'$.
\end{lemma}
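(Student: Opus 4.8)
The plan is to write down the two reverse-complement duplications explicitly. Write $x = x_\ell\, a\, x_r$ with $\abs{x_r}=i$, and denote $x_r = c_1 c_2\dots c_i$. The idea is to duplicate a length-$k$ window that straddles the distinguished occurrence of $a$; the inserted reverse-complement copy then contains a letter $\overline{a}$ at a controlled position. A second duplication of a length-$k$ window containing that $\overline{a}$ produces a copy of the form $\overline{w}^R$ containing an \emph{uncomplemented} $a$, and with the right choice of windows this new $a$ ends up exactly $i-2$ letters from the end of the resulting string.

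The first window is chosen as follows: set $j\eqdef\max\set{0,k-i}$ and let $v$ be the $k$-factor of $x$ formed by the last $j$ letters of $x_\ell$, then $a$, then $c_1\dots c_{k-j-1}$. One first checks this is a legal window, and this is exactly where the hypotheses are used: $\abs{x}\geq k+1$ and $i\geq 2$ together force $j\leq\abs{x_\ell}$, $k-j-1\leq i$, and $0\leq j\leq k-2$. Applying the reverse-complement rule to $v$ gives $x^{(1)}$, in which $\overline{v}^R$ is inserted immediately after $v$; a glance at $\overline{v}^R$ shows that $\overline{a}$ occupies position $k-j$ inside it (in particular, not the first position), and that the letters $c_{k-j}\dots c_i$ — a nonempty block, since $j\geq k-i$ — are what remains to its right.

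For the second duplication, take $w$ to be the $k$-factor of $x^{(1)}$ consisting of positions $2$ through $k$ of $\overline{v}^R$ followed by the single letter $c_{k-j}$ just after it. Since $\overline{a}$ lies at position $k-j\in\set{2,\dots,k}$ of $\overline{v}^R$, the window $w$ contains $\overline{a}$, hence $\overline{w}^R$ contains $a$. Applying the reverse-complement rule to $w$ yields $x^{(2)}\in D^2_k(x)$; writing $\overline{w}^R$ out explicitly and counting the letters of $x^{(2)}$ that follow its copy of $a$ — namely $c_1\dots c_{k-j-2}$ coming from inside $\overline{w}^R$, together with $c_{k-j+1}\dots c_i$ coming from what was left to the right — gives a total of $(k-j-2)+(i-k+j)=i-2$, which is the claim.

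The conceptual content is minimal: two duplications, the second essentially undoing the complementation of the first but with a net shift of two positions. The real work, and the only genuine obstacle, is the bookkeeping: ensuring that both windows are legitimate $k$-factors even in the extreme case $\abs{x}=k+1$, $i=2$ (which is precisely why $j$ is set to $\max\set{0,k-i}$), and then tracking positions through the two insertions carefully enough that the final suffix length comes out to exactly $i-2$ rather than $i$ or $i-1$.
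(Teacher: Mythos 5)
Your construction is correct and is essentially the same argument the paper relies on: the paper imports this lemma from \cite[Lemma 2]{BenSch22}, and its own proof of the palindromic analogue (Lemma~\ref{lem:palpush}) uses exactly the same two windows --- a length-$k$ window straddling $a$ with at least one letter of $x$ to its right, then the last $k-1$ letters of the inserted copy plus one further original letter --- so that the second reverse-complement undoes the complementation and reversal and leaves an uncomplemented $a$ exactly two positions closer to the end. Your version merely makes the window choice and legality checks explicit via $j=\max\set{0,k-i}$, and the bookkeeping checks out.
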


An important concept of subsequences is now defined.

\begin{definition}
\label{def:prop}
Let $k\geq 2$ be an integer, $x\in\Z_q^k$ be a string of length $k$, and let $y\in\Z_q^*$ be any string. We say $x$ is \emph{properly spaced} in $y$ if both of the following hold:
\begin{enumerate}
\item
$x$ is a subsequence of $y$, namely, there exist $y^{(0)}, y^{(1)}, y^{(2)},\dots,y^{(k)}\in\Z_q^*$ such that
\[
y = y^{(0)} x_0 y^{(1)} x_1 y^{(2)} \dots y^{(k-1)} x_{k-1} y^{(k)}.
\]
\item
If $k$ is even, then
\[
\abs*{y^{(1)}} \equiv \abs*{y^{(2)}} \equiv \dots \equiv \abs*{y^{(k)}} \equiv 0 \pmod{2}.
\]
\end{enumerate}
\end{definition}

\begin{example}
Consider the string $y=0110300203$. The string $323$ is properly spaced in $y$, as shown by the underlined locations $0110\underline{3}00\underline{2}0\underline{3}$ (note that $k=\abs{323}=3$ is odd in this case). The string $32$ is also properly spaced in $y$, as we see in $0110\underline{3}00\underline{2}03$ (this time $k=\abs{32}=2$ is even). However, $23$ is not properly spaced in $y$, since the length of the factor between $2$ and $3$ in $y$ is odd, and $k=\abs{23}=2$ is even.
\end{example}

We observe that the property of being properly-spaced persists in all descendants.

\begin{lemma}
\label{lem:propdes}
Let $k\geq 2$, $x\in\Z_q^k$, and $y\in\Z_q^*$, such that $x$ is properly spaced in $y$. If $y'\in D^*_k(y)$, then $x$ is also properly spaced in $y'$.
\end{lemma}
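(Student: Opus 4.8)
The plan is to show that a single reverse-complement duplication preserves proper-spacing, and then extend to arbitrary descendants by induction. So suppose $y\der_k y'$ via the rule $T^{\rc}_{j,k}$, which takes a $k$-factor $v$ of $y$ starting at position $j$ and inserts $\ov^R$ immediately after it. Fix a witness decomposition $y = y^{(0)} x_0 y^{(1)} x_1 \cdots y^{(k-1)} x_{k-1} y^{(k)}$ establishing that $x$ is properly spaced in $y$. The duplicated factor $v$ occupies $k$ consecutive positions of $y$; say these positions fall ``inside'' one of the blocks in the decomposition, or straddle several of them. The inserted string $\ov^R$ has length $k$, so the effect of the duplication on any block $y^{(m)}$ it touches is to lengthen it; blocks not touched are unchanged, and the letters $x_0,\dots,x_{k-1}$ are still present in order (none of them is deleted). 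Hence $x$ is still a subsequence of $y'$, with a decomposition obtained from the old one by enlarging one consecutive stretch of blocks. This handles condition~(1) of Definition~\ref{def:prop} immediately and for all $k$.

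The substance is condition~(2), relevant only when $k$ is even: we must check that the gap lengths $\abs{y^{(1)}},\dots,\abs{y^{(k)}}$ stay even after the duplication. The key arithmetic observation is that $k$ being even means $\abs{\ov^R}=k$ is even, so inserting $\ov^R$ adds an even number of letters in total. But one must be careful: the insertion point of $\ov^R$ is strictly between consecutive letters of $y$, and I need to argue that whichever gap $y^{(m)}$ (or concatenation of a gap with a piece straddling an $x_i$) absorbs the insertion, it absorbs an \emph{even} number of new letters. Here I would split into cases according to where the interval $[j,j+k)$ of duplicated positions sits relative to the marked positions of $x_0,\dots,x_{k-1}$. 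If $[j,j+k)$ lies entirely within a single gap $y^{(m)}$ with $1\le m\le k$, then $y^{(m)}$ grows by exactly $k$, which is even, and all other gaps are untouched — done. If $[j,j+k)$ straddles one or more of the marked letters $x_i$, one has to recompute the decomposition: the marked letters keep their roles, the old gaps get re-partitioned, and the newly inserted block $\ov^R$ gets distributed among them. In that straddling case I expect to use Lemma~\ref{lem:push} or a direct parity count to show the redistribution keeps every $\abs{y^{(m)}}$ even; the point is that the total added length $k$ is even and the marked letters of $x$ appear in $\ov^R$ an appropriate number of times with the right spacing parity. This is the only delicate point.

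Let me say more precisely how I expect the straddling case to go. Write $v = v_0 v_1\cdots v_{k-1}$, so $\ov^R = \overline{v_{k-1}}\,\overline{v_{k-2}}\cdots\overline{v_0}$. Suppose the duplicated window covers the marked positions of $x_{i}, x_{i+1},\dots,x_{i+r}$ for some $r\ge 0$ (so these are letters $v_{p}, v_{p'},\dots$ inside $v$ at various offsets, and also appear, in reverse-complemented form and reversed order, somewhere in $\ov^R$). The new string between $x_{i-1}$ and $x_i$ (or from $y^{(0)}$ up to $x_0$, etc.) now consists of: part of $v$, then all of $\ov^R$, then the rest — but crucially I can \emph{re-choose} which occurrences of the letters $x_i,\dots,x_{i+r}$ serve as the marked subsequence letters. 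The cleanest choice is to keep exactly the original occurrences (those inside the original $v$, now shifted), so that $\ov^R$ lies entirely inside one new gap $y^{(i+r+1)}$ (or after $x_{k-1}$, in which case there is no parity constraint). Then that gap grows by $k$ letters (even) plus whatever extra prefix/suffix of the $v$-portion got reallocated — but with the ``keep original occurrences'' choice, nothing else is reallocated, so again exactly $k$ new letters land in one gap. So in fact the straddling case reduces to the non-straddling case by a judicious choice of witness. I would present this as: it suffices to exhibit \emph{one} valid decomposition for $y'$, and the decomposition ``same marked occurrences, shifted'' works in all cases, adding exactly $k\equiv 0\pmod 2$ letters to a single gap (or to the suffix $y^{(k)}$). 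The induction on the number of duplication steps is then immediate.

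\begin{proof}
It suffices to treat a single duplication step $y\der_k y'$ and then induct on the number of steps in $y\der^*_k y'$, using that proper-spacing of $x$ in an intermediate string feeds into the next step; the base case $y'=y$ is trivial.

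So assume $y\der_k y'$ by the rule $T^{\rc}_{j,k}$: writing $y = py vq$ is ambiguous, so instead write $y = s\, v\, t$ with $\abs{s}=j$, $\abs{v}=k$, and
\[
y' = s\, v\, \ov^R\, t .
\]
Let $y = y^{(0)} x_0 y^{(1)} x_1 \cdots y^{(k-1)} x_{k-1} y^{(k)}$ be a decomposition witnessing that $x$ is properly spaced in $y$, and mark the $k$ positions in $y$ occupied by $x_0,\dots,x_{k-1}$. The copy of $v$ occupies $k$ consecutive positions of $y$, at positions $j,j+1,\dots,j+k-1$; let $I=\{j,\dots,j+k-1\}$ be this index set.

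We produce a witnessing decomposition for $y'$. None of the $k$ marked positions is removed by the duplication, and the insertion of $\ov^R$ happens immediately after position $j+k-1$. Thus, in $y'$, the same $k$ marked letters still appear, in the same relative order, with the letter $x_m$ still being the original occurrence (its absolute index is increased by $k$ if and only if its original position was $\ge j+k$, and unchanged otherwise). Keeping precisely these occurrences as the marked subsequence of $y'$, we obtain a decomposition
\[
y' = z^{(0)} x_0 z^{(1)} x_1 \cdots z^{(k-1)} x_{k-1} z^{(k)},
\]
where each $z^{(m)}$ equals the corresponding $y^{(m)}$ except for the single block into whose interior the string $\ov^R$ (of length $k$) was inserted; call its index $m_0$. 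Two sub-cases arise. If the insertion point $j+k-1$ lies between two marked positions, with the next marked position being that of $x_{m_0}$ for some $m_0\in\{0,1,\dots,k-1\}$, then $\ov^R$ is inserted into $y^{(m_0)}$ (which is $y^{(0)}$ when $m_0=0$ is the first marked letter), so $z^{(m_0)}$ is $y^{(m_0)}$ with $k$ extra letters inserted and $z^{(m)}=y^{(m)}$ for $m\neq m_0$. If instead $j+k-1$ lies after the last marked position, then $\ov^R$ is appended inside $y^{(k)}$, so $z^{(k)}=y^{(k)}\ov^R$ (up to the split point) and $z^{(m)}=y^{(m)}$ for $m\le k-1$.

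This already establishes part~(1) of Definition~\ref{def:prop} for $y'$, for every $k\ge 2$. For part~(2), assume $k$ is even. Then $\abs{\ov^R}=k\equiv 0\pmod 2$. In the first sub-case, if $m_0\ge 1$ then $\abs{z^{(m_0)}}=\abs{y^{(m_0)}}+k\equiv 0\pmod 2$ and $\abs{z^{(m)}}=\abs{y^{(m)}}$ for all other $m\in\{1,\dots,k\}$; if $m_0=0$ then all of $\abs{z^{(1)}},\dots,\abs{z^{(k)}}$ are unchanged. In the second sub-case, $\abs{z^{(k)}}=\abs{y^{(k)}}+k\equiv 0\pmod 2$ and $\abs{z^{(m)}}=\abs{y^{(m)}}$ for $m\in\{1,\dots,k-1\}$. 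In every case all of $\abs{z^{(1)}},\dots,\abs{z^{(k)}}$ remain even, so $x$ is properly spaced in $y'$.

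Finally, given $y'\in D^*_k(y)$, write a derivation $y = y_0 \der_k y_1 \der_k \cdots \der_k y_r = y'$. By the above, proper-spacing of $x$ is preserved at each step, so by induction on $r$ it holds for $y'$.
\end{proof}
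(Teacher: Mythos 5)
Your proof is correct and follows essentially the same route as the paper's (which is just two sentences long): insertions cannot destroy a subsequence, and each duplication inserts a contiguous block of $k$ letters at a single point, so exactly one gap grows by $k\equiv 0\pmod 2$ while the others are untouched. Your lengthy worry about the "straddling" case is a non-issue, as you yourself conclude — the duplicated window's overlap with marked positions is irrelevant since the inserted copy lands contiguously in one gap — so the final write-up is fine, just far more verbose than necessary.
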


\begin{proof}
Since $y'$ is obtained from $y$ by insertions only, the first requirement of Definition~\ref{def:prop} certainly holds for $x$ in $y'$. Additionally, if $k$ is even, since duplications insert $k$-factors, it follows that the parities of all $y^{(i)}$ from Definition~\ref{def:prop} remain unchanged.
\end{proof}

When a string is properly spaced, we can extract it as a suffix in one of the descendants, as the following lemma shows.

\begin{lemma}
\label{lem:suffix}
Let $k\geq 2$, $x\in\Z_q^k$, and $y\in\Z_q^*$, such that $x$ is properly spaced in $y$. Then there exists $v\in\Z_q^*$ such that
\[
y \der^*_k vx.
\]
\end{lemma}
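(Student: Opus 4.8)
The plan is to induct on the length of the string $y$ (or equivalently, on the total length $\sum_i \abs{y^{(i)}}$ of the ``gap'' factors in the properly-spaced decomposition $y = y^{(0)} x_0 y^{(1)} x_1 \dots y^{(k-1)} x_{k-1} y^{(k)}$), the goal being to push all the letters of $x$ to the right so they become a suffix. The base case is when every $y^{(i)}=\varepsilon$, in which case $y=x$ already and we take $v=\varepsilon$. For the inductive step, the idea is to move one of the $x_j$'s one position to the right past an adjacent letter of some gap $y^{(j+1)}$, while possibly lengthening the string; since our target is a suffix, it is natural to work from the rightmost obstruction, i.e. to first clear $y^{(k)}$, then $y^{(k-1)}$, and so on.

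The engine for moving letters rightward is Lemma~\ref{lem:push}: if the $i$-th letter from the end of the current string is $a$, then in two duplications of length $k$ we can reach a descendant in which $a$ is the $(i-2)$-nd letter from the end. The key point is that this shift preserves proper spacing of $x$ (by Lemma~\ref{lem:propdes}, proper spacing persists under any descendant), and, crucially, it changes the position-from-the-end of $x_{k-1}$ by an even amount. So the parity bookkeeping in Definition~\ref{def:prop} is exactly what makes the induction go through when $k$ is even: because $\abs{y^{(k)}}$ is even, we can apply Lemma~\ref{lem:push} repeatedly to $x_{k-1}$ — halving $\abs{y^{(k)}}$-many applications, each advancing $x_{k-1}$ two letters closer to the end — until $x_{k-1}$ becomes the last letter, i.e. $y^{(k)}$ is effectively emptied. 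Each such step only inserts new $k$-factors, so all the parity constraints on the remaining $y^{(i)}$ are preserved (again Lemma~\ref{lem:propdes}), and the new string is still of the form $v' x_0 v'' \dots x_{k-1}$ with the total gap length strictly smaller (we eliminated $y^{(k)}$). Once $x_{k-1}$ is the last letter, we recurse: $x_0 \dots x_{k-2}$ is properly spaced in the prefix preceding the final $x_{k-1}$, so by induction we can derive it to have $x_0\dots x_{k-2}$ as a suffix of that prefix, and appending the untouched final $x_{k-1}$ gives $y\der_k^* v x$.

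One subtlety to handle carefully is the hypothesis $\abs{x}\geq k+1$ needed for Lemma~\ref{lem:push}: when $x$ is already almost a suffix, the string might be short. This is easily circumvented — if the current string has length exactly $k$ it equals $x$ and we are done; otherwise it has length $\geq k+1$ and Lemma~\ref{lem:push} applies. Also, Lemma~\ref{lem:push} requires $i\geq 2$, which is automatically satisfied as long as the letter we are pushing is not already among the last two letters; when $x_{k-1}$ is the $0$-th or we have reduced the relevant gap to length $0$, there is nothing left to do for that letter. The main obstacle I expect is purely organizational: setting up the induction parameter and the ``rightmost uncleared gap'' so that (a) each application of Lemma~\ref{lem:push} strictly decreases it, (b) proper spacing is visibly maintained throughout via Lemma~\ref{lem:propdes}, and (c) the odd-$k$ and even-$k$ cases are handled uniformly — in the even case the parity guarantees an exact reduction to a suffix, while in the odd case there is no parity constraint at all and Lemma~\ref{lem:push} (which shifts by $2$) together with, if a single extra shift is ever needed, an elementary one-step duplication, suffices to bring each $x_j$ flush against the next.
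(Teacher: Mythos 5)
Your proposal is correct and follows essentially the same strategy as the paper's proof: use Lemma~\ref{lem:push} to drive $x_{k-1}$ to the end (relying on the parity condition of proper spacing when $k$ is even, and inserting one extra suffix duplication to fix parity when $k$ is odd), then iterate on $x_{k-2},\dots,x_0$, with Lemma~\ref{lem:propdes} guaranteeing the spacing invariant is maintained. Your framing as an induction on the total gap length, and your explicit handling of the $\abs{x}\geq k+1$ and $i\geq 2$ hypotheses of Lemma~\ref{lem:push}, are only organizational variations on the paper's argument.
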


\begin{proof}
By definition, we necessarily have $\abs{y}\geq k$. We consider two cases, depending on the parity of $k$.

If $k$ is even, the last letter of $x$, namely, $x_{k-1}$, is at an even distance from the end. If it is at the end, we duplicate the $k$-suffix twice (this does not change the last letter, but ensures at least one duplication was employed). Otherwise, we can use Lemma~\ref{lem:push} repeatedly to get $x_{k-1}$ to the end, with each application the distance of $x_{k-1}$ to the end diminishes by $2$. Thus, we have
\[
y \der^*_k v^{(k-1)} x_{k-1},
\]
for some $v^{(k-1)}\in \Z_q^*$. Since at least one duplication was used, we have
\[
\abs*{v^{(k-1)}x_{k-1}} \geq \abs{y}+k \geq 2k\geq k+2.
\]
By Lemma~\ref{lem:propdes}, $x$ is also properly spaced in $v^{(k-1)}x_{k-1}$. That means $x_{k-2}$ is at an even distance from the end of $v^{(k-1)}$. We ignore the last letter, $x_{k-1}$, and repeat the process above to push the letter $x_{k-2}$:
\[
v^{(k-1)}x_{k-1} \der^*_k v^{(k-2)}x_{k-2}x_{k-1}.
\]
Continuing in this manner for all the letters of $x$ we finally obtain
\[
y \der^*_k vx
\]
for some $v\in\Z_q^*$.

If $k$ is odd, we repeat the same proof as in the case of $k$ even. The only difference is that we are not guaranteed that the letter we are trying to push, $x_i$, is at an even distance from the end. If that is not the case, and the distance is odd, we simply perform a duplication on the suffix, thus making the distance even.
\end{proof}

Another technical helpful tool allows us to synchronize derivations in certain cases, and produce a common descendant.

\begin{lemma}
\label{lem:sync}
Let $k\geq 2$, $x\in\Z_q^k$, and $y\in\Z_q^*$, such that $x$ is properly spaced in $y$. Then there exists $v\in\Z_q^*$ such that
\begin{align*}
y & \der^*_k vx, & yx \der^*_k vx.
\end{align*}
\end{lemma}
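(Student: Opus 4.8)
The plan is to reduce the statement to Lemma~\ref{lem:suffix} plus one ``off‑by‑one'' reconciliation trick. Both required derivations will be made to end at the same string $vx$, assembled in three stages.

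First I would apply Lemma~\ref{lem:suffix} to $y$: since $x$ is properly spaced in $y$, there is a string $u$ with $y\der^*_k ux$, so $ux$ ends in a copy of $x$. Second, I would replay this very derivation on $yx$. Here I would use the elementary fact that $\der_k$ is compatible with right‑concatenation: if $a=pvw$ with $\abs{p}=i$, $\abs{v}=k$ and $a\der_k pv\ov^R w=a'$, then for any $z$ we have $az=pv(wz)$ and $T^{\rc}_{i,k}(az)=pv\ov^R(wz)=a'z$, whence by induction $a\der^*_k b$ implies $az\der^*_k bz$. Taking $a=y$, $b=ux$, $z=x$ gives $yx\der^*_k uxx$. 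At this point $y\der^*_k ux$ and $yx\der^*_k uxx$, so the two targets agree on the prefix $u$ and differ only by one extra copy of $x$.

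The key step is to absorb that extra copy. Put $v\eqdef ux\ox^R$. Starting from $ux$, apply $T^{\rc}_{\abs{u},k}$ to the $k$-suffix $x$ to get $ux\ox^R$; then apply $T^{\rc}_{\abs{u}+k,k}$ to the new $k$-suffix $\ox^R$, which inserts its reverse‑complement. Since complementation and reversal are commuting involutions, $\overline{\ox^R}{}^{R}=x$, so this second duplication yields $ux\ox^R x=vx$; hence $ux\der^*_k vx$. Starting from $uxx$, apply $T^{\rc}_{\abs{u},k}$ to the \emph{first} of the two trailing copies of $x$ (valid since $\abs{uxx}=\abs{u}+2k\ge\abs{u}+k$): with prefix $u$, duplicated factor that first copy of $x$, and suffix the second copy of $x$, the rule outputs $u\,x\,\ox^R\,x=vx$; hence $uxx\der^*_k vx$. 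Chaining, $y\der^*_k ux\der^*_k vx$ and $yx\der^*_k uxx\der^*_k vx$, which is exactly the claim for this $v$.

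I expect the only genuine subtlety to be the second stage: one must check that a length‑$k$ reverse‑complement derivation out of $y$ can be applied verbatim to $yx$ without ever disturbing the appended suffix $x$ — equivalently, the right‑concatenation compatibility of $\der_k$ stated above — which is routine but should be spelled out by induction on the number of steps. Everything else is the involution identity $\overline{\ox^R}{}^{R}=x$ together with position bookkeeping in the three applications of $T^{\rc}$; in particular the argument does not depend on the parity of $k$, since parity is handled entirely inside Lemma~\ref{lem:suffix}.
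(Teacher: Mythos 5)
Your proposal is correct and follows essentially the same route as the paper's proof: invoke Lemma~\ref{lem:suffix} to get $y\der^*_k ux$, replay it on $yx$ to get $uxx$, and reconcile via $v=ux\ox^R$ using two duplications from $ux$ and one from $uxx$. The only difference is that you explicitly flag and justify the right-concatenation compatibility of $\der_k$, which the paper uses implicitly; that is a welcome (if routine) addition rather than a divergence.
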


\begin{proof}
By Lemma~\ref{lem:suffix}, there exists $u\in\Z_q^*$ such that
\begin{equation}
\label{eq:sync}
y \der^*_k ux.
\end{equation}
Continuing this derivation we get
\[
y \der^*_k ux \der^2_k ux\ox^R x = vx,
\]
where we define $v\eqdef ux\ox^R$.

Now, starting with the same derivation as in~\eqref{eq:sync}, we get
\[
yx \der^*_k uxx \der_k ux\ox^R x = vx,
\]
which completes the proof.
\end{proof}

The following definition provides the property used by our main theorem in this section.

\begin{definition}
Let $k\in\N$, $k\geq 2$, let $\ell\geq 0$ be an integer, and let $y\in\Z_q^{\ell k}$ be a string. Partition $y$ into $k$-factors,
\[
y = y^{(0)} y^{(1)} \dots y^{(\ell-1)},
\]
with $y^{(i)}\in\Z_q^k$ for all $i$. Define
\[
\phi(y^{(i)}) \eqdef \begin{cases}
\varepsilon & \text{if there exists $i'<i$ such that $y^{(i')}=y^{(i)}$,}\\
y^{(i)} & \text{otherwise.}
\end{cases}
\]
Then the \emph{summary} of $y$, denoted $\Phi(y)$ is defined as
\[
\Phi(y) \eqdef \phi(y^{(0)}) \phi(y^{(1)}) \dots \phi(y^{(\ell-1)}).
\]
\end{definition}

Intuitively, to compute the summary of string, we scan its $k$-factors from left to right. Each $k$-factor that appears for the first time is appended to the summary, whereas $k$-factors that have already appeared before are removed. It follows that the summary of a string contains a partition into distinct $k$-factors, and whose total length is upper bounded by $kq^k$.

\begin{example}
Assume $q=4$ and $k=2$. Let $y=011011013030023003$. Its partition into $2$-factors is $01|10|11|01|30|30|02|30|03$. If we underline repeated $2$-factors in the partition we get $01|10|11|\underline{01}|30|\underline{30}|02|\underline{30}|03$. By removing the repeated $2$-factors from the partition, we obtain the summary of $y$,
\[
\Phi(y)=011011300203.
\]
We observe that while $01$ appears twice as a $2$-factor in the summary, $\underline{01}1\underline{01}1300203$, the second appearance is not removed since it does not appear in the partition.
\end{example}

In the main theorem for this section, we show that strings which have the same summary (up to a short prefix perhaps) also have a common descendant.

\begin{theorem}
\label{th:longk}
Let $k\geq 2$, and $x,y\in\Z_q^*$. Write
\begin{align*}
x &= x' x'' & y&= y' y'',
\end{align*}
where
\begin{align*}
\abs*{x'} & = \abs*{x} \bmod k, & \abs*{y'}=\abs*{y} \bmod k.
\end{align*}
If $x'=y'$ and $\Phi(x'')=\Phi(y'')$, then
\[
D^*_k(x)\cap D^*_k(y)\neq \emptyset.
\]
\end{theorem}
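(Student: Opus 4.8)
The plan is to peel off repeated $k$-factors one at a time and argue by induction, invoking the synchronization machinery of Lemma~\ref{lem:sync} at each step. First I would fix notation: since $|x''|\equiv|y''|\equiv 0\pmod k$, write $x''=x^{(0)}x^{(1)}\cdots x^{(m-1)}$ and $y''=y^{(0)}y^{(1)}\cdots y^{(n-1)}$ with each factor in $\Z_q^k$, and induct on $m+n$. The base case is immediate: if $m=n=0$ then $x=x'=y'=y$ and $x$ is its own common descendant; more generally, if neither $x''$ nor $y''$ has a repeated $k$-factor then $x''=\Phi(x'')=\Phi(y'')=y''$, so $x=y$. (If exactly one of $m,n$ vanishes, the equality of summaries — which never discards a first occurrence of a factor — forces the other to vanish as well.)

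For the inductive step, assume without loss of generality that $x''$ contains a repeated factor, say $x^{(j)}=x^{(i)}$ with $i<j$. Write $s=x'x^{(0)}\cdots x^{(j-1)}$, $t=x^{(j+1)}\cdots x^{(m-1)}$, and $\tilde x:=st$, so $x=sx^{(j)}t$. The factor $x^{(j)}=x^{(i)}$ occurs as a contiguous $k$-block of $s$ followed by $(j-1-i)k$ further letters of $s$; checking Definition~\ref{def:prop} (for even $k$, the gaps inside the block are $0$ and the gap after it is $(j-1-i)k$, all even) shows $x^{(j)}$ is properly spaced in $s$. Then, exactly as in the proof of Lemma~\ref{lem:sync}: by Lemma~\ref{lem:suffix} pick $u$ with $s\der^*_k ux^{(j)}$; then $s\der^*_k ux^{(j)}\der^2_k ux^{(j)}\overline{x^{(j)}}^R x^{(j)}$, while $sx^{(j)}\der^*_k ux^{(j)}x^{(j)}\der_k ux^{(j)}\overline{x^{(j)}}^R x^{(j)}$. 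Carrying out these derivations inside the prefix $sx^{(j)}$ (respectively~$s$) leaves the tail $t$ alone, so $x=(sx^{(j)})t\der^*_k (ux^{(j)}\overline{x^{(j)}}^R x^{(j)})t$ and $\tilde x=(s)t\der^*_k (ux^{(j)}\overline{x^{(j)}}^R x^{(j)})t$; hence $D^*_k(x)\cap D^*_k(\tilde x)\neq\emptyset$. Deleting a repeated factor does not change the summary, so $\Phi(\tilde x'')=\Phi(x'')=\Phi(y'')$ and $\tilde x$ still carries the short prefix $x'$; since $m+n$ dropped, the induction hypothesis gives $D^*_k(\tilde x)\cap D^*_k(y)\neq\emptyset$.

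The step I expect to be the real obstacle is passing from ``$x$ and $\tilde x$ share a descendant'' together with ``$\tilde x$ and $y$ share a descendant'' to ``$x$ and $y$ share a descendant'': the relation ``having a common descendant'' is not obviously transitive, so one needs a confluence-type property of reverse-complement duplication — that any two descendants of a single string themselves have a common descendant — which must be established separately (morally, duplications at well-separated positions commute, and the push/suffix tools let one realign two derivations, but the parity bookkeeping for even $k$ makes this delicate). An alternative that sidesteps confluence is to strengthen the inductive statement so it produces a common descendant of a controlled shape: show that every $x'x''$ can be driven to a normal form $z(x',\Phi(x''))$ depending only on the short prefix and the summary, whence $x\der^*_k z(x',\Phi(x''))$ and $y\der^*_k z(y',\Phi(y''))$ with the two targets identical. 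The difficulty there is making the normal form explicit, since the suffix produced by Lemma~\ref{lem:suffix} is only obtained existentially. In either approach the genuinely routine parts are the properly-spaced/parity check and the placement of the re-duplicated block $\overline{x^{(j)}}^R x^{(j)}$; the substantive work is the confluence or normal-form argument.
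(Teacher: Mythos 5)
Your reduction step is essentially sound and matches the paper's key move: the observation that a repeated $k$-factor $x^{(j)}=x^{(i)}$ is properly spaced in the preceding prefix (the parity check for even $k$ is exactly as you describe), so Lemmas~\ref{lem:suffix} and~\ref{lem:sync} let you ``absorb'' it. But the overall architecture has the genuine gap you yourself flag, and it is fatal as written: you derive $D^*_k(x)\cap D^*_k(\tilde x)\neq\emptyset$ and, by induction, $D^*_k(\tilde x)\cap D^*_k(y)\neq\emptyset$, and then need $D^*_k(x)\cap D^*_k(y)\neq\emptyset$. The relation ``has a common descendant'' is not known to be transitive here; that would require a confluence property of the reverse-complement duplication system ($\tilde x\der^*_k z_1$ and $\tilde x\der^*_k z_2$ imply $z_1,z_2$ have a common descendant), which is nowhere established and does not follow from the theorem itself, since the summary $\Phi$ is \emph{not} invariant under duplication (an insertion of $k$ letters shifts the $k$-factor partition boundaries of everything to its right, changing the factors). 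So the induction as stated does not close.

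The paper sidesteps this entirely by strengthening the invariant in exactly the direction of your proposed alternative: it scans $x''$ and $y''$ in lockstep with two pointers $i_x,i_y$, maintaining a \emph{single} explicit string $u$ that is simultaneously a descendant of both processed prefixes, together with the equality of their partial summaries. When the current factors agree, both pointers advance and $u$ becomes $ux^{(i_x)}$; when one side's current factor is a repeat, Lemma~\ref{lem:sync} upgrades $u$ to a $u'$ reachable from both $u$ and $ux^{(i_x)}$, and only that pointer advances. Because one common descendant is carried through every step, no transitivity is ever invoked. If you want to salvage your version, replace the inductive claim ``$\tilde x$ and $y$ have a common descendant'' with ``there is a single $u$ with $x''\der^*_k u$ and $y''\der^*_k u$'' and thread the derivation of $x$ through the same $u$ at each deletion step --- which is, in effect, the paper's proof.
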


\begin{proof}
We shall actually prove that $x''$ and $y''$ have a common descendant, namely,
\[
D^*_k(x'')\cap D^*_k(y'') \neq \emptyset.
\]
Then, by prepending the prefix $x'=y'$, we shall trivially deduce the desired claim,
\[
D^*_k(x'x'')\cap D^*_k(y'y'') \neq \emptyset.
\]
Thus, we shall forget about the prefix $x'=y'$ from now on. Other simple cases we easily dismiss are that of $x=y$, as well as the case of $\Phi(x'')=\Phi(y'')=\varepsilon$, since then, again, $x=y$.

Let us now partition $x''$ and $y''$ into $k$-factors,
\begin{align*}
x'' & = x^{(0)} x^{(1)} \dots x^{(\ell_x-1)}, &
y'' & = y^{(0)} y^{(1)} \dots y^{(\ell_y-1)},
\end{align*}
where $x^{(i)}, y^{(j)}\in\Z_q^k$ for all $i$ and $j$. The remainder of the proof shall proceed in iterations. We initialize two indices, $i_x=i_y=0$. We shall, at each iteration, make sure that
\begin{equation}
\label{eq:req1}
\Phi(x^{(0)} \dots x^{(i_x-1)}) = \Phi(y^{(0)} \dots y^{(i_y-1)}),
\end{equation}
and that there exists some $u\in\Z_q^*$ such that
\begin{align}
\label{eq:req2}
x^{(0)} \dots x^{(i_x-1)} &\der^*_k u, &
y^{(0)} \dots y^{(i_y-1)} &\der^*_k u.
\end{align}
Obviously, before we begin, when $i_x=i_y=0$, both~\eqref{eq:req1} and~\eqref{eq:req2} hold, since for~\eqref{eq:req1} we have $\Phi(\varepsilon)=\varepsilon$, and for~\eqref{eq:req2} choosing $u=\varepsilon$ gives the trivial $\varepsilon\der^*_k \varepsilon$. We distinguish between the following cases:

\textbf{Case 1:} $i_x<\ell_x$, $i_y<\ell_y$ and $x^{(i_x)}=y^{(i_y)}$. Since~\eqref{eq:req1} holds, we also have,
\[
\Phi(x^{(0)} \dots x^{(i_x)}) = \Phi(y^{(0)} \dots y^{(i_y)}).
\]
Additionally, since~\eqref{eq:req2} holds, we obviously have
\begin{align*}
x^{(0)} \dots x^{(i_x-1)} x^{(i_x)} &\der^*_k u x^{(i_x)}, &
y^{(0)} \dots y^{(i_y-1)} y^{(i_y)} &\der^*_k u y^{(i_y)}.
\end{align*}
Thus, we can increase both $i_x$ and $i_y$ by $1$, while maintaining~\eqref{eq:req1} and~\eqref{eq:req2}.

\textbf{Case 2:} $i_x<\ell_x$, $i_y<\ell_y$ and $x^{(i_x)}\neq y^{(i_y)}$. Since~\eqref{eq:req1} holds, and also $\Phi(x'')=\Phi(y'')$, there exists $i'_x<i_x$ such that $x^{(i'_x)}=x^{(i_x)}$, or there exists $i'_y<i_y$ such that $y^{(i'_y)}=y^{(i_y)}$. Let us assume the former (if the latter holds, the proof is symmetric). We then have
\[
\Phi(x^{(0)} \dots x^{(i_x)}) =
\Phi(x^{(0)} \dots x^{(i_x-1)}) =
\Phi(y^{(0)} \dots y^{(i_y-1)}).
\]

By definition, we observe that $x^{(i_x)}=x^{(i'_x)}$ is properly spaced in $x^{(0)}\dots x^{(i_x-1)}$. By Lemma~\ref{lem:propdes}, this is true also in any descendant of $x^{(0)}\dots x^{(i_x-1)}$, in particular, $u$. By Lemma~\ref{lem:sync}, there exists $u'\in\Z_q^*$ such that
\begin{align*}
u &\der^*_k u', & u x^{(i_x)} &\der^*_k u'.
\end{align*}
We therefore have
\begin{align*}
x^{(0)} \dots x^{(i_x-1)} x^{(i_x)} &\der^*_k u x^{(i_x)}\der^*_k u', &
y^{(0)} \dots y^{(i_y-1)} &\der^*_k u \der^*_k u'.
\end{align*}
It follows that we can increase $i_x$ by $1$, while maintaining~\eqref{eq:req1} and~\eqref{eq:req2}.

\textbf{Case 3:} $i_x < \ell_x$ and $i_y = \ell_y$. In this case we have $y^{(0)}\dots y^{(i_y-1)}=y''$. Since $\Phi(y'')=\Phi(x'')$, there exists $i'_x<i_x$ such that $x^{(i'_x)}=x^{(i_x)}$. The remainder of this case is the same as Case 2.

\textbf{Case 4:} $i_x = \ell_x$ and $i_y < \ell_y$. This case is symmetric to Case 3.

\textbf{Case 5:} $i_x=\ell_x$ and $i_y=\ell_y$. When we reach this case, $u$ from~\eqref{eq:req2} is the desired common descendant $x''$ and $y''$.

Finally, we observe that we shall always reach Case 5, since in each of the other cases, $i_x$ or $i_y$ is increased by $1$. This completes the proof.
\end{proof}

With the help of Theorem~\ref{th:longk} we obtain the following corollary.

\begin{corollary}
\label{cor:long}
For any $k\in\N$, $k\geq 2$, any $n\in\N$ and any even $q\in\N$,
\[
A_q(n;*)^{\rc}_k \leq q^{k-1}\cdot \frac{q^{kq^k+k}-1}{q^k-1},
\]
and therefore
\[
R_q(*)^{\rc}_k = 0.
\]
\end{corollary}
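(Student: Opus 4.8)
The plan is to derive Corollary~\ref{cor:long} directly from Theorem~\ref{th:longk} by a counting argument on the number of possible ``labels'' a codeword can carry. For a codeword $c\in\Z_q^n$, write $c=c'c''$ with $\abs{c'}=n\bmod k$ and $\abs{c''}$ a multiple of $k$, exactly as in the theorem. The pair $(c',\Phi(c''))$ determines, by Theorem~\ref{th:longk}, an equivalence class: if two distinct codewords have the same pair, their descendant cones intersect, violating the code property. Hence $\abs{C}$ is at most the number of distinct such pairs.

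First I would bound the number of choices for the prefix $c'$: since $\abs{c'}=n\bmod k\leq k-1$, there are at most $q^{k-1}$ possibilities (one can be slightly more careful and say $q^{n\bmod k}$, but $q^{k-1}$ is a safe uniform bound). Next I would bound the number of distinct summaries $\Phi(c'')$. As noted in the text right after the definition of the summary, $\Phi(c'')$ is a concatenation of \emph{distinct} $k$-factors over $\Z_q$, so its length is at most $kq^k$ and is a multiple of $k$; thus $\Phi(c'')\in\bigcup_{j=0}^{q^k}\Z_q^{jk}$, which has size $\sum_{j=0}^{q^k}q^{jk}=\frac{q^{k(q^k+1)}-1}{q^k-1}=\frac{q^{kq^k+k}-1}{q^k-1}$. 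Multiplying the two bounds gives
\[
A_q(n;*)^{\rc}_k \le q^{k-1}\cdot \frac{q^{kq^k+k}-1}{q^k-1},
\]
which is a constant independent of $n$. The capacity statement then follows immediately from the definition of $R_q(*)^{\rc}_k$ as $\limsup_{n\to\infty}\frac1n\log_q A_q(n;*)^{\rc}_k$: a bounded numerator inside the logarithm forces the limit to be $0$ (and it is clearly $\geq 0$ since a one-codeword code always exists).

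The only genuinely non-routine point is the justification that every codeword in a valid code maps to a \emph{distinct} $(c',\Phi(c''))$ pair — and that is precisely the contrapositive of Theorem~\ref{th:longk}, so there is no real obstacle here; the rest is the elementary counting of distinct-factor strings and a geometric-series evaluation. I would just be careful to state clearly that the bound on $\abs{\Phi(c'')}$ uses the ``distinct $k$-factors'' structure (at most $q^k$ of them, each of length $k$), and to handle the degenerate edge cases ($q^k-1=0$ cannot happen since $q\geq 2, k\geq 2$) so the closed form is valid.
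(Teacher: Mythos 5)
Your proposal is correct and follows essentially the same route as the paper: both invoke Theorem~\ref{th:longk} to argue that distinct codewords must have distinct pairs of (short prefix, summary), bound the prefixes by $q^{k-1}$ and the summaries by the geometric sum $\sum_{\ell=0}^{q^k} q^{k\ell}=\frac{q^{kq^k+k}-1}{q^k-1}$, and conclude that the constant bound forces $R_q(*)^{\rc}_k=0$. No gaps; your extra remarks about the degenerate cases and the nonnegativity of the capacity are fine but not needed.
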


\begin{proof}
By Theorem~\ref{th:longk}, a code cannot contain two codewords with the same summary and the same prefix of length at most $k-1$. The number of prefixes is upper bounded by $q^{k-1}$. As for the summaries, each summary is a string of length $\ell k$, for some $0\leq \ell\leq q^k$. Thus,
\[
A_q(n;*)^{\rc}_k \leq q^{k-1} \sum_{\ell=0}^{q^k} q^{k\ell} = q^{k-1}\cdot \frac{q^{kq^k+k}-1}{q^k-1}.
\]
Substituting this into the definition of coding capacity we get
\[
R_q(*)^{\rc}_k = 0,
\]
as claimed.
\end{proof}

\section{Palindromic Duplication}
\label{sec:pal}

Having determined the coding capacity of the reverse-complement duplication channel in the previous sections, we now turn to the simpler case of palindromic duplication. In this case, factors are duplicated in reverse, but without applying the complement.

We first look at the case of duplication of length $k=1$. In this case, reversing does nothing, and so this duplication is the same as tandem duplication of length $k=1$, which was already studied in~\cite{JaiFarSchBru17a}.

\begin{corollary}
For any $n,q\in\N$, $q\geq 2$,
\[
A_q(n;*)^{\pal}_1 = q\sum_{i=0}^{n-1}(q-1)^i =
\begin{cases}
q \cdot \frac{(q-1)^n-1}{q-2} & q\geq 3, \\
qn & q=2,
\end{cases}
\]
and therefore
\[
R_q(*)^{\pal}_1 = \log_q (q-1).
\]
\end{corollary}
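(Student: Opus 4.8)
The plan is to reduce the statement to the tandem-duplication result of~\cite{JaiFarSchBru17a}. The crucial observation is that when $k=1$ a palindromic duplication is indistinguishable from a tandem duplication: the duplicated factor $v$ is a single letter, hence $v^R=v$, and therefore $T^{\pal}_{i,1}(x)=uvv^Rw=uvvw$, which is exactly the length-$1$ tandem-duplication rule applied at position $i$. Thus $\cT^{\pal}_1$ coincides with the set of length-$1$ tandem-duplication rules, the descendant cones $D^*_1(\cdot)$ agree, and consequently $A_q(n;*)^{\pal}_1$ equals the maximal size of a length-$1$ tandem-duplication-correcting code of length $n$ over $\Z_q$. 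The value of the latter, and the induced capacity $\log_q(q-1)$, are precisely what is established in~\cite{JaiFarSchBru17a}, so the two displayed formulas follow at once.

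For completeness one can also argue directly, along the lines of Section~\ref{sec:rcbin1}. Associate to each nonempty $x\in\Z_q^+$ its \emph{run-length reduction} $\rho(x)$, the word obtained by collapsing every maximal run of equal letters to one copy (equivalently, the unique subsequence of $x$ with no two adjacent letters equal). A single length-$1$ duplication only lengthens an already-present run, so $\rho$ is invariant on each descendant cone; conversely, from a letter $a$ one reaches every power $a^m$, so two strings with the same reduction $a_0a_1\dots a_{\ell-1}$ can both be driven to $a_0^{m}a_1^{m}\dots a_{\ell-1}^{m}$ for $m$ large, giving a common descendant. Hence two strings share a descendant iff they have the same run-length reduction, and an optimal length-$n$ code contains exactly one representative per reduction realizable inside $\Z_q^n$. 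A reduction of length $\ell$ is realizable iff $1\le\ell\le n$ (pad the final run), and the number of reductions of length $\ell$ is $q(q-1)^{\ell-1}$ (free first letter, then each letter differs from its predecessor). Summing over $\ell$ yields $A_q(n;*)^{\pal}_1=\sum_{\ell=1}^{n}q(q-1)^{\ell-1}=q\sum_{i=0}^{n-1}(q-1)^i$, which is $q\cdot\frac{(q-1)^n-1}{q-2}$ when $q\ge 3$ and $qn$ when $q=2$.

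The capacity then follows by evaluating $\limsup_{n\to\infty}\tfrac1n\log_q A_q(n;*)^{\pal}_1$: for $q\ge 3$ the quantity grows as $(q-1)^n$ up to a bounded factor, so the limit is $\log_q(q-1)$; for $q=2$ it equals $2n$, so the exponent is $(1+\log_2 n)/n\to 0=\log_2 1=\log_2(q-1)$. In all cases $R_q(*)^{\pal}_1=\log_q(q-1)$.

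There is essentially no obstacle here. The only two points that need a line of care are the identity $v^R=v$ for single letters, which is what legitimizes treating the channel as the tandem-duplication channel (and hence citing~\cite{JaiFarSchBru17a}), and the degenerate alphabet size $q=2$, where the code size grows only linearly in $n$ yet the capacity is still $0$, consistent with the uniform answer $\log_q(q-1)$.
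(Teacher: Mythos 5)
Your first paragraph is exactly the paper's argument: the paper likewise observes that $v^R=v$ for a single letter, identifies $\cT^{\pal}_1$ with length-$1$ tandem duplication, and cites Theorems 15 and 16 of~\cite{JaiFarSchBru17a} for the code size, so on this route you and the paper coincide. Your second, self-contained argument via the run-length reduction $\rho$ is something the paper does not spell out here (it only performs the count of the cited optimal codewords), but it is correct and is really the $q$-ary analogue of the signature argument the paper uses in Section~\ref{sec:rcbin1} for reverse-complement duplication, with $\set{a_i,\overline{a_i}}$ replaced by the singleton $\set{a_i}$; it buys independence from the external reference at essentially no extra cost. The counting, the $q=2$ degeneration to $2n$, and the capacity computation (including the $\log_2 1=0$ consistency check) are all correct.
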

\begin{proof}
Since a palindromic duplication of length $1$ is the same as a tandem duplication of length $1$, we look at the known results for tandem duplication. According to~\cite[Theorems 15,16]{JaiFarSchBru17a}, the optimal codes in these cases contain all the strings whose adjacent positions contain distinct letters, except perhaps in the suffix. Choosing the first letter has $q$ options, followed by $0\leq i\leq n-1$ letters distinct from their predecessors in the string for a total of $(q-1)^i$ options, finally followed by repeating the last letter chosen to get a string of length $n$.
\end{proof}

We note that the results are quite similar to Corollaries~\ref{cor:binary1} and~\ref{cor:nonbinary1}. In the binary case the coding capacity vanishes in both types of duplication, whereas in the non-binary case, it is slightly higher in palindromic duplication, compared with reverse-complement duplication.

Moving on to longer duplication lengths, $k\geq 2$, we observe that the complement operation plays no role whatsoever in Section~\ref{sec:rclong}. We do need to replace Lemma~\ref{lem:push}, which we do easily by adapting \cite[Lemma 2]{BenSch22}.

\begin{lemma}
\label{lem:palpush}
Let $k\geq2$, $x \in \Z_q^*, |x|\geq k + 1$, and let $i \geq 2$ be an integer. If the $i$-th letter from the end of $x$ is $a$, then there exists $x'\in D^{2}_k(x)$, such that $a$ is the $(i-2)$-nd letter from the end of $x'$, and where $D(\cdot)$ denotes descendants with respect to palindromic duplication.
\end{lemma}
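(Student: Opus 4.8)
The plan is to adapt the proof of \cite[Lemma~2]{BenSch22} almost verbatim, replacing every reverse-complement duplication there by a palindromic duplication of length $k$. Since the factor inserted by a palindromic duplication is $v^R$ rather than $\ov^R$, all of the complements in that argument simply disappear. The underlying idea is that $a$ is never literally moved---insertions never delete anything---but rather a \emph{fresh} copy of $a$ is manufactured at distance $i-2$ from the end, inside the reversed copy of a carefully placed $k$-factor through $a$. In the reverse-complement setting two duplications are unavoidable because one insertion of $\ov^R$ only produces a copy of $\oa$, so a second duplication is needed to restore $a$; in the palindromic setting $v^R$ already contains $a$, and the two duplications are instead used to tune the total displacement of the tracked letter to be exactly $2$, uniformly in $k$.

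Concretely, I would argue as follows. Let $a$ be the $i$-th letter from the end of $x$, set $j\eqdef\max\set{0,k-i}$ (so $0\le j\le k-2$, using $i\ge 2$), and let $v$ be the $k$-factor of $x$ whose $j$-th letter, counting from $0$ on the left, is this occurrence of $a$. One checks that $v$ is well defined: $a$ has $i\ge k-1-j$ letters to its right, and $a$ has $\abs{x}-1-i\ge j$ letters to its left, the latter precisely because $\abs{x}\ge k+1$; this is the only use of the length hypothesis, and it is tight when $\abs{x}=k+1$. Applying a palindromic duplication of length $k$ to $v$ gives $x\der_k x^{(1)}$, and the reversed copy $v^R$ now carries a fresh occurrence of $a$, at distance $d_1\eqdef 2j+i-k+1$ from the end of $x^{(1)}$. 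Next set $j'\eqdef k-2-j$ and apply a second palindromic duplication of length $k$, this time to the $k$-factor $v'$ of $x^{(1)}$ whose $j'$-th letter is this fresh $a$; one verifies $v'$ fits, using the room created by the first duplication together with $\abs{x}\ge k+1$. This yields $x^{(1)}\der_k x'$, and the reversed copy inside $x'$ carries an occurrence of $a$ at distance $2j'+d_1-k+1=i-2$ from the end of $x'$. Hence $x'\in D^2_k(x)$ has $a$ as its $(i-2)$-nd letter from the end.

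The routine part is the position bookkeeping: tracking, through each of the two insertions, how far each relevant copy of $a$ lies from the end of the string; this mirrors the computation in \cite[Lemma~2]{BenSch22}. The one genuinely delicate point is checking that the two $k$-factors $v$ and $v'$ really exist in the tight regimes---$i=2$, $\abs{x}=k+1$, or $i$ close to $\abs{x}-1$---which is exactly what forces the adaptive choice $j=\max\set{0,k-i}$ in place of a fixed value, and where the hypothesis $\abs{x}\ge k+1$ is required. In contrast with Lemma~\ref{lem:suffix}, no case split on the parity of $k$ is needed here, since choosing $j$ and $j'=k-2-j$ in tandem makes the two displacements sum to exactly $2$ for every $k$.
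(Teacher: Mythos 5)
Your proposal is correct and takes essentially the same route as the paper: two palindromic duplications of length $k$, the first through the tracked occurrence of $a$ and the second through the freshly created copy in the reversed factor, with the offsets chosen so the two displacements sum to exactly $2$. The paper packages this as the single derivation $x=uvbcw \der_k uvbbv^Rcw \der_k uvbbv^Rccvw$ with $\abs{v}=k-1$ and $a$ in $v$, which is precisely your bookkeeping with $j$ and $j'=k-2-j$ made concrete.
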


\begin{proof}
Let us write $x=uvbcw$, where $u,v,w\in\Z_q^*$, $\abs{v}=k-1$, $b,c\in\Z_q$, and where $a$, the $i$-th letter from the end of $x$, occurs in $v$. Then,
\[
x=uvbcw \der_k uvbbv^R cw \der_k uvbbv^R ccvw \eqdef x',
\]
where $\der$ denotes derivation with respect to palindromic duplication. Thus, $x'$ contains $a$ as the $(i-2)$-nd letter from the end.
\end{proof}

We now obtain the palindromic analogue of Corollary~\ref{cor:long}.

\begin{corollary}
\label{cor:pallong}
For any $k\in\N$, $k\geq 2$, any $n\in\N$ and any even $q\in\N$,
\[
A_q(n;*)^{\pal}_k \leq q^{k-1}\cdot \frac{q^{kq^k+k}-1}{q^k-1},
\]
and therefore
\[
R_q(*)^{\pal}_k = 0.
\]
\end{corollary}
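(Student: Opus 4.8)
The plan is to re-run the entire development of Section~\ref{sec:rclong}, essentially verbatim, with the palindromic duplication rules replacing the reverse-complement ones, since the complement operation was never used in an essential way. First I would observe that Definition~\ref{def:prop} of being ``properly spaced'' and Lemma~\ref{lem:propdes} (persistence of proper spacing under descent) refer only to the fact that duplications insert $k$-factors, so they hold word-for-word in the palindromic setting. The proof of Lemma~\ref{lem:suffix} invokes only the push tool Lemma~\ref{lem:push}, which we have already replaced by its palindromic counterpart Lemma~\ref{lem:palpush}; hence the same statement holds, namely, if $x\in\Z_q^k$ is properly spaced in $y$ then $y\der^*_k vx$ for some $v\in\Z_q^*$, where $\der_k$ now denotes palindromic derivation.

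The only place where the complement appears explicitly is the synchronization lemma, Lemma~\ref{lem:sync}, in the derivations $ux\der^2_k ux\ox^R x$ and $uxx\der_k ux\ox^R x$. Under palindromic duplication these become $ux\der^2_k uxx^R x$ --- duplicate the $k$-suffix $x$ to obtain $uxx^R$, then duplicate the new $k$-suffix $x^R$, whose reverse is $(x^R)^R=x$ --- and $uxx\der_k uxx^R x$, obtained by duplicating the first of the two consecutive copies of $x$. Thus the palindromic analogue of Lemma~\ref{lem:sync} holds with $v\eqdef uxx^R$, yielding both $y\der^*_k vx$ and $yx\der^*_k vx$. With these adapted lemmas in place, the proof of Theorem~\ref{th:longk} carries over unchanged: it appeals only to Lemma~\ref{lem:propdes} and Lemma~\ref{lem:sync}, and the summary map $\Phi$ is defined purely combinatorially. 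So I would obtain the palindromic version of Theorem~\ref{th:longk}: writing $x=x'x''$ and $y=y'y''$ with $\abs{x'}=\abs{x}\bmod k$ and $\abs{y'}=\abs{y}\bmod k$, if $x'=y'$ and $\Phi(x'')=\Phi(y'')$ then $D^*_k(x)\cap D^*_k(y)\neq\emptyset$ for palindromic duplication.

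The corollary then follows exactly as Corollary~\ref{cor:long}: no code correcting any number of palindromic duplications of length $k$ may contain two codewords sharing both the length-$(\abs{x}\bmod k)$ prefix and the summary; the prefix ranges over at most $q^{k-1}$ values, and every summary is a concatenation of distinct $k$-factors, hence of length $k\ell$ for some $0\le\ell\le q^k$. This gives
\[
A_q(n;*)^{\pal}_k \le q^{k-1}\sum_{\ell=0}^{q^k} q^{k\ell} = q^{k-1}\cdot\frac{q^{kq^k+k}-1}{q^k-1},
\]
and hence $R_q(*)^{\pal}_k=0$. Since nearly every step is a transcription of the reverse-complement argument, I do not expect a genuine obstacle; the one point that needs a moment's care is confirming that the two short derivations in the synchronization lemma still terminate at a common suffix $vx$ once the complement is dropped --- which is precisely the identity $(x^R)^R=x$ used above, the exact mirror of $\overline{(\ox^R)^R}=x$ in the original proof.
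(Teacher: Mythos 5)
Your proposal is correct and follows exactly the paper's route: the paper's own proof of Corollary~\ref{cor:pallong} consists precisely of replacing Lemma~\ref{lem:push} by Lemma~\ref{lem:palpush} and dropping the complement in Lemma~\ref{lem:sync}, then rerunning Section~\ref{sec:rclong} verbatim. Your explicit check that the two synchronizing derivations still meet at $uxx^Rx$ via $(x^R)^R=x$ is the one detail the paper leaves implicit, and you have verified it correctly.
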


\begin{proof}
Scanning Section~\ref{sec:rclong}, we replace Lemma~\ref{lem:push} with Lemma~\ref{lem:palpush}, and in Lemma~\ref{lem:sync} we simply remove the complement. The rest remains the same, giving us the same conclusion in the palindromic case as in the reverse-complement case.
\end{proof}

\section{Conclusion}
\label{sec:conc}

In this work, we studied the parameters of duplication-correcting codes, both in the reverse-complement and the palindromic settings. We determined $A_q(n;*)^{\rc}_1$ and constructed optimal codes. We then showed that the coding capacity, $R_q(n;*)^{\rc}_k$, $k\geq 2$, is vanishing. A similar result for $k\geq 2$ and palindromic duplication was also proved.

Other open questions remain. While having a vanishing coding capacity is disappointing, we do not yet know how to construct $(n,M;t)^{\rc}_k$ and $(n,M;t)^{\pal}_k$ for a finite $t$. If $t=1$, namely, a single duplication error is to be corrected, then~\cite{BenSch22} constructed reverse-complement duplication-correcting codes when $k$ is odd. More generally, we can use a burst-insertion-correcting code, e.g., \cite{SchWacGabYaa17,BitHanPolVor21,WanTanSimGabFar24}. However, when $t\geq 2$ no solution is known except using a $tk$-insertion correcting code, which is, most likely, far from optimal. We leave finding such codes, determining the optimal code size as well as the coding capacity, for future work.

\backmatter

\bmhead{Acknowledgments}

This work was supported in part by the US National Science Foundation (NSF), grant CCF-5237372, and by the Zhejiang Lab BioBit Program (grant no. 2022YFB507). The author M.~Schwartz is currently on a leave of absence from Ben-Gurion University of the Negev.
    
\bibliography{allbib}

\end{document}